\documentclass[10pt,transaction,twocolumn]{IEEEtran}
\usepackage{amsmath,amsfonts,amssymb,mathrsfs,amsthm}
\usepackage{latexsym}
\usepackage[nospace,noadjust]{cite}
\usepackage{latexsym}
\usepackage{epsfig}
\usepackage{hyperref}
\usepackage{epstopdf}
\usepackage{environ}
\usepackage[nospace,noadjust]{cite}
\usepackage{array}
\usepackage[english]{babel}
\usepackage{color}
\usepackage{babel}
\usepackage{graphicx,xcolor,colortbl}    
\usepackage{epstopdf,epsfig}
\usepackage{graphicx}
\usepackage{tabularx,ragged2e,booktabs,caption}
\usepackage{algorithm,algpseudocode}
\usepackage{tikz}
\usetikzlibrary{calc}
\usepackage{graphicx}
\usepackage{caption}
\usepackage{hhline}
\usepackage{caption}
\usepackage{subcaption}
\usepackage{pgfplots} 
\usepackage{tikz}
\usetikzlibrary{calc}
\pgfplotsset{compat=newest} 
\pgfplotsset{plot coordinates/math parser=false} 
\newlength\figureheight 
\newlength\figurewidth 
\usetikzlibrary{shapes,arrows}
\usepackage{xcolor,colortbl}
\newtheorem{theorem}{{\bf Theorem}}
\newtheorem{assumption}{{\bf Assumption}}

\newtheorem{proposition}{\noindent {\bf Proposition}}
\newtheorem{lemma}{\noindent {\bf Lemma}}

\include{new_commands}
\newcommand{\diag}{\mathop{\rm diag}}
\newcommand{\tr}{\mathop{\rm tr}}

\makeatletter
\let\l@ENGLISH\l@english
\makeatother

\begin{document}%
\title{Blind Measurement Selection: A Random Matrix Theory Approach}
\author{Khalil Elkhalil,~\IEEEmembership{Student Member,~IEEE,} Abla Kammoun,~\IEEEmembership{Member,~IEEE}, Tareq~Y.~Al-Naffouri,~\IEEEmembership{Member,~IEEE}, and Mohamed-Slim Alouini,~\IEEEmembership{Fellow,~IEEE}

\thanks{
K. Elkhalil, A. Kammoun, T. Y. Al-Naffouri and M.-S. Alouini are with the Electrical Engineering Program, King Abdullah University of Science and Technology, Thuwal, Saudi Arabia; e-mails: \{khalil.elkhalil, abla.kammoun, tareq.alnaffouri, slim.alouini\}@kaust.edu.sa.
}

}

\maketitle
\vspace{-15mm}
\begin{abstract}
    This paper considers the problem of selecting a set of $k$ measurements from $n$ available sensor observations. The selected measurements should minimize a certain error function assessing the error in estimating a certain $m$ dimensional parameter vector.   The exhaustive search inspecting each of the $n\choose k$ possible choices would require a very high computational complexity and as such is not practical for large $n$ and $k$. Alternative methods with low complexity
    have recently been investigated but their main drawbacks are that 1) they require perfect knowledge of the measurement matrix  and 2) they need to be
applied at the pace of change of the measurement matrix. To overcome these issues, we consider the asymptotic regime in which $k$, $n$ and $m$ grow large at the same pace. Tools from random matrix theory are then used to approximate in closed-form the  most important error measures that are commonly used. The asymptotic approximations are then leveraged  to select properly $k$ measurements exhibiting low values for the asymptotic error measures.  Two heuristic  algorithms are  proposed: the first one
merely consists in applying 
the convex optimization  artifice to the asymptotic error measure. The second algorithm is a low-complexity greedy algorithm that attempts to look for a sufficiently good solution for the original minimization problem. The greedy algorithm can be applied to both the exact and the asymptotic error measures and can be thus implemented in  blind and channel-aware fashions.       
We present two potential applications where the proposed algorithms can be used, namely antenna selection for uplink transmissions in large scale multi-user systems and sensor selection for wireless sensor networks. Numerical results are also presented and sustain the efficiency of the proposed blind methods in reaching the performances of channel-aware algorithms. 
\end{abstract}
\begin{IEEEkeywords}
Measurement selection, blind selection, random matrix theory, Gram random matrices, massive MIMO, wireless sensor networks.
\end{IEEEkeywords}
\section{Introduction} \label{introduction}
\PARstart{M}{measurement} selection is an old concept that finds its roots in many applications such as robotics, wireless sensor networks and wireless communications to name a few \cite{boyd,robotics_ref,kammer}. It aims to reduce the complexity of  the estimation problem in linear models where the $n$-dimensional response vector is linearly related to the unknown $m$-dimensional vector. 
 The reduction in computational complexity is achieved by using
only the $k$ measurements that minimize a certain given error function assessing the quality of the selected measurements. One naive approach for solving the measurement selection problem is to go through all $n\choose k$ possible selections and select the ones that present the lowest achievable value of a given  error measure. This procedure, though being optimal, is not practical especially when high dimensional observations are considered. In general, it seems that looking for an optimal solution is expected to call
for solving an
NP-hard problem as asserted by \cite{boyd}, which gives little hope of determining the optimal solution using a polynomial complexity algorithm. 
 As a result, attention has turned to sub-optimal alternatives to solve the measurement selection problem. In this vein, genetic algorithms  using some local search methods have been proposed in  \cite{genetic}. Local optimization techniques have been also proposed in   \cite{nguyen,john}. 
 Although these methods can have good performance with modest complexities, they do not guarantee any theoretically achievable bound on the performance. The first achievable bound has been
derived in \cite{boyd} where the authors in \cite{boyd} resorted to convex relaxation artifice. This has led to a convex problem that can be solved with a complexity growing as $\mathcal{O}\left(n^3\right)$.  

The aforementioned algorithms allow good performances coupled with a lower complexity as compared to exhaustive search. However, in order to accurately evaluate the overall complexity, it is important to consider how often the selection procedure should be repeated. For fast-fading varying linear models, the underlying measurement matrix, capturing the linear dependence between the input and output vectors, changes at a rapid pace. Hence, the overall complexity should be scaled by the
number of times the measurement matrix changes over a given time window, which can result in a prohibitively high computational complexity.
To overcome these issues, we propose in this work blind selection algorithms that leverage the statistics of the measurement matrix rather than its instantaneous realization. This can be for instance useful when for some practical concerns, it is not possible to acquire the measurement matrix.  The main idea behind the proposed blind methods lies in the observation that the considered error measure depending on the random measurement matrix can be approximated by some deterministic quantities
depending solely on the statistics of the measurement matrix. This fact is supported by  results from random matrix theory, confirming the accuracy of the approximation in large dimensional settings.  Using this theory, we show that the most used error measures can be approximated in closed-form by deterministic quantities depending soleley on the correlation between the columns of the measurement matrix. Interestingly, it turns out that, as far as the asymptotic regime is
con, it concerned the correlation matrix that
holds all the information about the best set of measurements to be selected. Particularly, it is shown that if the columns of the measurement matrix are uncorrelated, any randomly selected subset of $k$ measurements would asymptotically exhibit the same performances. It thus unfolds that the benefit from optimizing over the set of measurements to be selected is more significant in case of high correlation between the columns of the measurement matrix.          

 \par
 Based on the obtained asymptotic approximations of the considered error measures, we propose two different blind approaches. The first one is mostly inspired from the work of \cite{boyd} and consists in applying the convex artifice to the asymptotic error measures. However, the optimization of the resulting problem might  not be tractable, as there is no guarantee of the convexity of the asymptotic error measure. We, therefore, establish in this paper the convexity of the asymptotic error
 measure, which opens up the possibility of using standard convex optimization tools. The second algorithm is a greedy algorithm that attempts to get close to the optimal solution within a few iterations. Interestingly, the greedy algorithm can be also applied to the exact error measures, and can be thus implemented in both blind and channel-aware scenarios. It is shown that not only the greedy algorithm presents lower complexity but it also achieves higher performances than the
 convex-relaxation based algorithm when implemented in either blind or channel-aware modes. 
 \par 
 The proposed algorithms can be used in many applications. We select in this paper two potential applications where the measurement selection problem arise. The first one concerns the design of low complexity linear receivers for uplink large-scale multi-user MIMO systems, better known as Massive MIMO systems. Such systems are gaining an increasing interest and constitute promising candidates for future wireless systems, primarily due to their abilities of achieving remarkable
 performance enhancements in terms of capacity, radiated energy efficiency and link reliability \cite{larsson,marzetta,rusek}. However, while the use of multiple antennas  allows to significantly improve the spatial diversity, it comes inevitably at the cost of a higher computational complexity, which might call into question the feasibility of such systems \cite{gao}. Besides, it is not even clear whether the performance enhancement  is worthy  of using all antenna resources. It might happen
 in many scenarios that some antennas undergo severe fading, and as such, selecting a subset of antennas will result in substantial saving in complexity without sacrificing performance. Selecting these antennas is the crucial achievement of antenna selection algorithms. The use of these algorithms  has already been  advocated in \cite{molisch,sanayei} as an efficient solution  to reduce the number of RF chains in conventional MIMO systems, leading to a significant reduction in complexity and costs while preserving most of the potential of full MIMO systems.  The second application concerns the problem of sensor selection in wireless sensor networks (WSN), already
 extensively studied in the literature of signal processing. A question of interest  in this field, is how to find the optimal placement of $k$ sensors, assuming  $n$ available sensors in different locations \cite{yao}. 

 To sum up, the major  contributions of this paper are summarized as follows:
\begin{enumerate}
    \item We provide accurate asymptotic approximations for the three most used error measures, namely the Mean Square Error (MSE), the Log Volume of the Confidence Ellipsoid (LCE) and the Worst Case Error Variance (WEV). These approximations depend solely on the statistics of the measurement matrix and are shown to be convex.  
    \item Based on the provided approximations, we propose two blind algorithms to perform antenna selection without knowledge of the instantaneous measurement matrix. The first one is based on the concept of convex relaxation while the second one is a greedy iterative algorithm that attempts to get close to the optimal solution.  
	\item We study the complexity of both algorithms and show that the greedy algorithm achieves quadratic complexity. 
    \item We select two applications in which the problem of sensor selection arise, namely antenna selection in massive MIMO systems and sensor selection in WSN. We show how the proposed  algorithms can be used in both applications and compare their performances with channel aware algorithms which assume perfect knowledge of the measurement matrix. 
        We show that as long as the correlation between the columns of the measurement matrix is high, the average performance is close to that of channel
        aware algorithms.  
\end{enumerate}
The remainder of the paper is organized as follows. In section \ref{formulation}, we state the measurement selection problem and present some related works. In section \ref{blind_measurement}, we provide asymptotic approximations for the asymptotic error measures, based on which we propose two different blind algorithms to perform measurement selection. 
Finally, and prior to concluding the paper in section \ref{conclusion}, we discuss in section \ref{application} two potential applications for the proposed blind approach.
\par
\emph{Notations:} Throughout the paper, we use the following notations: Vectors are denoted by lower case bold letters and matrices are denoted by bold capital letters ($\mathbf{I}_n$ is the identity matrix of size $n$). For a given matrix $\mathbf{A}$, we refer by $\left[\mathbf{A}\right]_{i,j}$ its $\left(i,j\right)$th entry, and use $\mathbf{A}^T$ and $\mathbf{A}^H$ to denote its transpose and Hermitian respectively. We respectively denote by $\left \|. \right \|$, $\det \left(.\right)$ and $\tr\left(.\right)$, the spectral norm, the determinant and the trace of a matrix. Finally, we denote by $\diag\left(\mathbf{a}\right)$, the diagonal matrix with diagonal elements, the entries of $\mathbf{a}$.

\section{Problem formulation and related works} \label{formulation}
We consider the problem of estimating  an $m-$dimensional vector $\mathbf{x} \in \mathbb{C}^{m\times 1}$ from  measurements when the observed vector $\mathbf{y}\in \mathbb{C}^{n \times 1}$ is  related to $\mathbf{x}$ by the following relation:
\begin{equation}
\label{linear1}
\mathbf{y}=\mathbf{H}\mathbf{x}+\mathbf{v},
\end{equation}
Herein $\mathbf{H}=\left \{ h_{i,j} \right \} \in \mathbb{C}^{n \times m}$ denotes the measurement matrix and is assumed to be Gaussian with one-side correlation $\mathbf{R}$, i.e., $\mathbf{H}=\mathbf{R}^{\frac{1}{2}} \mathbf{W}$, where $\mathbf{W} \in \mathbb{C}^{n \times m}$ is a matrix with i.i.d zero mean unit variance entries and $\mathbf{v}$ is the additive noise vector with independent, zero mean, unit variance,  circularly symmetric complex Gaussian entries, i.e. $\mathbf{v} \sim \mathcal{CN}\left(\mathbf{0}_{n \times 1},\mathbf{I}_n\right)$.
Assuming that the number of measurements $n$ exceeds the signal dimension $m$, i.e., $n > m$, the estimate of $\mathbf{x}$ denoted by $\widehat{\mathbf{x}}$ can be recovered by using the least square (LS) estimator as \cite{sayed}
\begin{equation}
\label{blue}
\begin{split}
\widehat{\mathbf{x}}& = \left(\mathbf{H}^H\mathbf{H}\right)^{-1}\mathbf{H}^{H} \mathbf{y}  \\
& =\left(\mathbf{W}^H\mathbf{R}\mathbf{W}\right)^{-1}\mathbf{W}^H \mathbf{R}^{\frac{1}{2}} \mathbf{y} \\
& = \mathbf{x} + \left(\mathbf{W}^H\mathbf{R}\mathbf{W}\right)^{-1}\mathbf{W}^H \mathbf{R}^{\frac{1}{2}} \mathbf{v}.
\end{split}
\end{equation}
Clearly, the estimation error, $\mathbf{x-\widehat{x}}$ is zero mean with  covariance:
\begin{equation}
\label{covariance}
\mathbf{\Sigma} = \left(\mathbf{W}^H\mathbf{R}\mathbf{W}\right)^{-1}.
\end{equation}
\subsection{Measurement selection}
Measurement selection intends to select the $k$ \emph{best} measurements that are the most representative in the sense that they constitute the set of $k$ measurements minimizing a certain given error.  Once selected, these measurements will be used in place of the whole available data vector. By reducing the available number of measurements, measurement selection inevitably induces  a performance loss, but allows a reduction in the computational complexity. This
becomes all the more important in several applications such as large-scale MIMO systems, where the treatment of the whole number of measurements received by the large antenna array might be not possible. 
To mathematically formulate measurement selection, we define the selection matrix $\mathbf{S} \in \mathbb{R}^{k \times n}$ as the matrix that permits to extract $k$ measurements from $\mathbf{y}$. Let $\mathcal{S}$ denote the set of indexes of cardinality $k$ containing the indexes of measurements to be selected. The selected measurement vector is thus  given by 
\begin{equation}
\label{selected}
\mathbf{y}_{\mathcal{S}} = \mathbf{S}\mathbf{y},
\end{equation}
where $\mathbf{S}$ is defined as follows
\begin{equation}
\label{select_matrix}
\left[\mathbf{S}\right]_{i,j}=\left\{\begin{matrix}
1 & j=\mathcal{S}\left[i \right ] \\ 
0 &  \textnormal{otherwise} 
\end{matrix}\right., i=1,\cdots,k.
\end{equation}
where $\mathcal{S}[i]$ denotes the $i$-th element in set $\mathcal{S}$.
Based on the structure of $\mathbf{S}$ in (\ref{select_matrix}), we have the following properties
\begin{itemize}
	\item $\mathbf{S}\mathbf{S}^T=\mathbf{I}_k$. 
	\item $\mathbf{S}^T\mathbf{S}=\diag\left(\mathbf{s}\right)$.
\end{itemize}
where $\mathbf{s}=\left \{ s_i \right \}_{i=1,\cdots,n}$ is a $n-$dimensional vector with entries equal to $1$ at the locations given by $\mathcal{S}$ and zeros elsewhere.
The LS estimator $\widehat{\bf x}_{\mathcal{S}}$ obtained from using the selected measurement vector ${\bf y}_{\mathcal{S}}$ is :
$$
\widehat{\bf x}_{\mathcal{S}}=\left({\bf H}^{H}{\rm diag}({\bf s}){\bf H}\right)^{-1}{\bf H}^{H}{\bf S}^{T}{\bf y}_S
$$
Upon applying the operator defined by $\mathbf{S}$, the resulting error covariance matrix, which we denote by $\mathbf{\Sigma}_{\mathcal{S}}$ easily writes as 
\begin{equation}
\label{cov_selection}
\begin{split}
    \mathbf{\Sigma}({\bf s}) &=\left(\mathbf{H}^H{\rm diag}({\bf s }) \mathbf{H}\right)^{-1} \\
& = \left(\mathbf{W}^H \mathbf{R}^{\frac{1}{2}}\diag\left(\mathbf{s}\right)\mathbf{R}^{\frac{1}{2}}\mathbf{W}\right)^{-1}.
\end{split}
\end{equation}
It can be seen from (\ref{cov_selection}) that $\mathbf{\Sigma}_{\mathcal{S}}$ is a Gram matrix with one side correlation given by the matrix ${\bf R}^{\frac{1}{2}}\diag\left(\mathbf{s}\right)\mathbf{R}^{\frac{1}{2}}$. \\
Measurement selection consists in selecting the optimal set  $\mathcal{S}^*$ with cardinality $k$, or equivalently vector ${\bf s}$ with only $k$ non-zero elements equal to $1$, that minimizes a certain error measure assessing the estimation quality. In other words, the optimal set can be obtained as the solution of the following problem: 
\begin{equation}
\label{selection_pb}
\begin{split}
    &{\bf s}^*= \underset{{\bf s}\ \in \ \mathbb{R}^n}{\text{argmin}} \:\:\:\: f\left(\mathbf{\Sigma}({\bf s})\right)\\ 
    & \text{s.t.} \:\: \:\: \:\: \:\:  s_i\in\left\{0,1\right\}\\
    & \:\: \:\: \:\: \:\: \:\: \:\: \:  {\bf 1}^{T}{\bf s}=k
\end{split}
\end{equation}
where here $f$ denotes the considered measure function. 
There have been various measures proposed in the literature \cite{serfling} to assess the estimation quality of the LS. All of them heavily depend  on the covariance matrix of the error \label{covariance}. In this paper, we will focus on the following ones:
\begin{enumerate}
    \item The Mean Square Error (MSE):
        The mean square error is defined as the average euclidian distance between the estimated vector and ${\bf x}$. When only a set of $k$ measurements is employed, the MSE writes as: 
\begin{equation}
\label{MSE}
\begin{split}
    \text{MSE}({\bf s}) & = \tr \left({\bf W}^{H}{\bf R}^{\frac{1}{2}}{\rm diag}({\bf s}){\bf R}^{\frac{1}{2}}{\bf W}\right)^{-1}
 \end{split}
\end{equation}
\item The Log Volume of the confidence Ellipsoid (LCE) 
    For a Gaussian random vector $\widehat{\bf x}$ in $\mathbb{C}^m$ with mean ${\bf x}$ and covariance $\boldsymbol{\Sigma}$, the $\eta-$ confidence ellipsoid is a multi-dimendional generalization of the $\eta-$ confidence interval. It corresponds to the minimum volume ellipsoid that contains  $\widehat{\bf x}-{\bf x}$ with propability $\eta$ and is given by:
   
\begin{equation}
\label{ellipsoid}
\mathcal{E}=\left \{\mathbf{z}:\mathbf{z}^{H}\mathbf{\Sigma}^{-1}\mathbf{z} 
\leq \alpha
\right \},
\end{equation}
where $\alpha=F^{-1}_{\chi^2_{2m}}\left(\eta\right)$, $F_{\chi^2_{2m}}$ being the cumulative distribution function of a chi-squared random variable with $2m$ degrees of freedom. The volume of the $\eta-$confidence ellipsoid defined in (\ref{ellipsoid}) is  (see Serfling \cite{serfling})  
\begin{equation}
\label{volume}
\textbf{vol}\left(\mathcal{E}_{\alpha}\right)=\frac{\left(\alpha\pi\right)^{m}}{\Gamma\left(m+1\right)}\text{det} \left(\mathbf{\Sigma}\right),
\end{equation}
where $\Gamma\left(.\right)$ is the Gamma function. 
It appears from \ref{volume} that the  determinant of $\boldsymbol{\Sigma}$ plays the role played by the variance in one dimension hence its  name {\it generalized variance}. When vector $\widehat{\bf x}$ represents an estimate of a given parameter vector, the lowest is the generalized variance, the highest is the estimation quality. 
It might be more convenient in practice to work with the log of the volume of the $\eta-$ ellispoid. A good estimate is thus characterized by a small value of the log-epsilloid volume, which will be confused, from now on, with $\frac{1}{m}\log {\rm det}(\frac{1}{n}\boldsymbol{\Sigma})$. We define thus the log volume of the confidence ellipsoid associated with the vector of selected measurement $\widehat{\bf x}_{\mathcal{S}}$ as:
\begin{align*}
    {\rm LCE}({\bf s})&=-\frac{1}{m}\log\left({\rm det}\left(\frac{1}{n}{\bf H}^{H}{\rm diag}({\bf s}){\bf H}\right)\right)\\
             &=-\frac{1}{m}\log\left({\rm det}\left(\frac{1}{n}{\bf W}^{H}{\bf R}^{\frac{1}{2}}{\rm diag}({\bf s}){\bf R}^{\frac{1}{2}}{\bf W}\right)\right)
\end{align*}

\item {Worst Case Error Variance (WEV)}: The worst case error variance (WEV) quantifies the maximum  variance of error over all directions. It corresponds to the maximum eigenvalue of the error covariance matrix. The WEV associated with the vector of selected measurement $\widehat{\bf x}_{\mathcal{S}}$ is thus defined as \footnote{The normalization factor $\frac{1}{m}$ is considered herein to comply with the asymptotic growth regime of random matrix theory}:
    $$
    {\rm WEV}({\bf s})=\frac{1}{\lambda_{min}\left(\frac{1}{m}{\bf W}^{H}{\bf R}^{\frac{1}{2}}{\rm diag}({\bf s}){\bf R}^{\frac{1}{2}}{\bf W}\right)}
    $$
    \end{enumerate}
\subsection{Related Works}
The main literature related to the present paper is represented by the work in \cite{boyd}, 
The main idea in this work relies on the observation that it is the non-convex nature of the constraints, requiring the selection vector ${\bf s}$ to possess elements in $\left\{0,1\right\}$ that makes problem     (\ref{selection_pb}) intractable. To overcome this issue, \cite{boyd} solves instead a convex related problem obtained by substituting  the Boolean constraints $s_i \in \left \{ 0,1 \right \}$ by the convex constraints $0 \leq s_i \leq 1$: 
\begin{equation}
\label{boyd_relax}
\begin{split}
    &\widehat{\mathbf{s}}= \arg\min_{{\bf s} \ \in \ \mathbb{R}^{n}}f(\boldsymbol{\Sigma}({\bf s}))\\
& \text{s.t.} \:\: \:\: \:\: \:\: \:\: \:\: \:\:\mathbf{1}^T \mathbf{s}=k  \\
& \:\: \:\: \:\: \:\: \:\: \:\: \:\: \:\: \: \: \: 0 \leq s_i \leq 1, \ \  i=1,\cdots,n.
\end{split}
\end{equation}
It is worth mentioning that the output of the optimization in (\ref{boyd_relax}) yields a higher value than the maximum objective function in (\ref{selection_pb}), and as such can be viewed as a global upper bound on the performance. Moreover, the optimal vector $\widehat{\mathbf{s}}$ can contain real values not necessarily zeros and ones. In order to obtain the indexes of the selected measurements, one should order the entries of $\widehat{\mathbf{s}}$ and then assign ones to the $k$
greatest values and zeros to the remaining entries. This results in a feasible solution to  the selection problem in (\ref{selection_pb}) which yields a lower bound on the objective function. \\
To solve the problem in (\ref{boyd_relax}), one can resort to interior-point methods which require few tens of iterations to converge where each iteration is performed with a complexity of $\mathcal{O}\left(n^3\right)$ computations. For more details on convex relaxation, the readers are referred to \cite{boyd} and references therein. \par
\section{Blind measurement selection} \label{blind_measurement}
Previous works dealing with sensor selection have essentially been based on the assumption of perfect knowledge of the measurement matrix ${\bf H}$. This not only might  not be satisfied in practice but also can make the application of the previously proposed algorithms more difficult as they should be applied at every  change in the measurement matrix ${\bf H}$. 

To overcome this issue, we propose in this work blind methods that leverage the knowldege of the channel statistics to perform the selection. These methods rely heavily on advanced results of random matrix theory. 

{\it Main Idea}:
The idea behind the proposed methods hinges on the fact that as the dimensions of the channel matrix grow large, quantities depending on the measurement matrix become more predictable in that they can be well-approximated by deterministic quantities depending only on the channel statistics. 
Such deterministic quantities can be characterized by resorting to tools from random matrix theory.
In light of this observation, we propose in this work to compute, in closed form, accurate approximations of the three error measures, namely the MSE, LCE and WEV.  As we will see later, the asymptotic analysis can be leveraged to blindly select good sets of measurements. 

\subsection{Asymptotic analysis of the error measures}
In this section, we determine, in closed form, accurate approximations for the MSE, LCE and WEV. For technical purposes, we shall consider the following growth regime:
\begin{assumption}
    We assume both that $n$ and $m$ grow large while their ratio $\frac{n}{m}$ satisfies: 
$$
\frac{n}{m}\to c\in \left(1,\infty\right)
$$
We also assume that $k$ grows large with:
$$
0<\lim\inf \frac{k}{n}<\lim\sup \frac{k}{n}<1. 
$$
and 
$$
\lim\inf \frac{k}{m}> 1.
$$
\label{ass:regime}
\end{assumption}
The channel matrix ${\bf H}$ is assumed to follow the following statistical model:
\begin{assumption}
    \label{ass:model}
    ${\bf H}\in\mathbb{C}^{n\times m}$ is complex Gaussian matrix with one-side correlation ${\bf R}$, i.e,
    $$
    {\bf H}={\bf R}^{\frac{1}{2}}{\bf W}
    $$
    where ${\bf W}$ is a matrix with i.i.d. normally distributed entries having zero-mean and unit variance. Moreover, the matrix ${\bf R}$ satisfies the following conditions:
    \begin{enumerate}
        \item ${\bf R}$ has a bounded spectral norm, i.e,
            $$
           \sup_n  \|{\bf R}\| <\infty
            $$
        \item The normalized trace of ${\bf R}$ satisfies:
            $$
            \inf_n \frac{1}{n}{\rm tr}{\bf R} >0.
            $$
        \end{enumerate}
\end{assumption}
With these assumptions at hand, we are ready to analyze the asymptotic behavior for the MSE and LCE. 
\begin{lemma}\cite{silverstein}
     Let $\delta$ be the unique solution to the following equation
    \begin{equation}
    \delta=m\left(\tr\left[{\bf R}{\rm diag}(s)\left({\bf I}_n+\delta{\bf R}{\rm diag}({\bf s})\right)^{-1}\right]\right)^{-1}
    \label{eq:delta}
\end{equation}
Define $\overline{\rm MSE}({\bf s})$ as
$$
\overline{\rm MSE}({\bf s})=\delta
$$
    Then, under assumptions \ref{ass:regime} and \ref{ass:model}, ${\rm MSE}({\bf s})$ satisfies
    $$
    {\rm MSE}({\bf s}) - \overline{\rm MSE}({\bf s})\xrightarrow[n\to\infty]{a.s.} 0.
    $$
\end{lemma}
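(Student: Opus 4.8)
The plan is to recognize $\mathrm{MSE}(\mathbf{s})$ as the value at the origin of (minus) the Stieltjes transform of a one-sided correlated Gram matrix, and to push the deterministic equivalent of \cite{silverstein} down to the point $z=0$. As a first reduction, write $\diag(\mathbf{s})=\mathbf{S}^{T}\mathbf{S}$, so that $\mathbf{W}^{H}\mathbf{R}^{\frac12}\diag(\mathbf{s})\mathbf{R}^{\frac12}\mathbf{W}=\big(\mathbf{S}\mathbf{R}^{\frac12}\mathbf{W}\big)^{H}\big(\mathbf{S}\mathbf{R}^{\frac12}\mathbf{W}\big)$. The columns of $\mathbf{S}\mathbf{R}^{\frac12}\mathbf{W}$ are independent and $\mathcal{CN}(\mathbf{0},\mathbf{R}_{\mathcal{S}})$, where $\mathbf{R}_{\mathcal{S}}\myeq\mathbf{S}\mathbf{R}\mathbf{S}^{T}$ is the $k\times k$ principal submatrix of $\mathbf{R}$ supported on $\mathcal{S}$, so this matrix has the same law as $\mathbf{R}_{\mathcal{S}}^{\frac12}\widetilde{\mathbf{W}}$ with $\widetilde{\mathbf{W}}\in\mathbb{C}^{k\times m}$ having i.i.d. $\mathcal{CN}(0,1)$ entries; hence $\mathrm{MSE}(\mathbf{s})$ has the law of $\tr\big(\widetilde{\mathbf{W}}^{H}\mathbf{R}_{\mathcal{S}}\widetilde{\mathbf{W}}\big)^{-1}$. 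Since $\mathbf{R}\diag(\mathbf{s})$ and $\mathbf{R}_{\mathcal{S}}$ share their nonzero eigenvalues, equation (\ref{eq:delta}) is equivalent to $m/\delta=\tr\big[\mathbf{R}_{\mathcal{S}}(\mathbf{I}_{k}+\delta\mathbf{R}_{\mathcal{S}})^{-1}\big]$. The claim is thus reduced to a plain correlated-Gram model of inner dimension $m$ and ``sample size'' $k$, with ratio $k/m>1$ for $n$ large by Assumption~\ref{ass:regime}.

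Next I would settle the well-posedness of $\delta$. Writing $\lambda_{1},\dots,\lambda_{k}$ for the eigenvalues of $\mathbf{R}_{\mathcal{S}}$ and $g(\delta)\myeq\sum_{i=1}^{k}\delta\lambda_{i}/(1+\delta\lambda_{i})$, equation (\ref{eq:delta}) reads $g(\delta)=m$; the function $g$ is continuous, non-decreasing on $(0,\infty)$, equals $0$ at $0$ and tends to $\mathrm{rank}(\mathbf{R}_{\mathcal{S}})$ as $\delta\to\infty$, and this rank exceeds $m$ for $n$ large. Hence $g(\delta)=m$ has a unique positive root, and $\|\mathbf{R}_{\mathcal{S}}\|\le\|\mathbf{R}\|$ together with the lower bound on $\tfrac1n\tr\mathbf{R}$ keep this root within a fixed compact subset of $(0,\infty)$ along the sequence. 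It is precisely the strict inequality $\liminf k/m>1$ that makes this step — and the hard-edge step below — go through.

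Third is the bulk deterministic equivalent, which is the content of \cite{silverstein}. For $z<0$ put $\mathbf{Q}(z)\myeq\big(\widetilde{\mathbf{W}}^{H}\mathbf{R}_{\mathcal{S}}\widetilde{\mathbf{W}}-z\mathbf{I}_{m}\big)^{-1}$, so that $\mathrm{MSE}(\mathbf{s})=\lim_{z\uparrow 0}\tr\mathbf{Q}(z)$. The usual resolvent calculus — isolating one column of $\widetilde{\mathbf{W}}$, invoking the Sherman--Morrison identity and the trace lemma $\widetilde{\mathbf{w}}_{i}^{H}\mathbf{A}\widetilde{\mathbf{w}}_{i}\approx\tr\mathbf{A}$, and exploiting the contraction property of the associated fixed-point map — shows that $\tr\mathbf{Q}(z)$ concentrates around a deterministic quantity $\delta_{m}(z)$ solving the $z$-version of (\ref{eq:delta}) with $\delta_{m}(0)=\delta$. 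A variance estimate through the Poincar\'e--Nash inequality applied to the Gaussian entries of $\widetilde{\mathbf{W}}$ gives $\mathrm{Var}\big(\tr\mathbf{Q}(z)\big)=\mathcal{O}(m^{-2})$, and Borel--Cantelli then upgrades this to $\tr\mathbf{Q}(z)-\delta_{m}(z)\xrightarrow[n\to\infty]{a.s.}0$ for each fixed $z<0$.

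Finally I would let $z\uparrow 0$. By uniqueness and monotonicity of its defining equation, $z\mapsto\delta_{m}(z)$ extends continuously to $z=0$ with value the root $\delta$, so $\delta_{m}(z)\to\delta$; on the random side, $\tr\mathbf{Q}(z)=\sum_{j}(\lambda_{j}-z)^{-1}$ with $\lambda_{j}$ the eigenvalues of $\widetilde{\mathbf{W}}^{H}\mathbf{R}_{\mathcal{S}}\widetilde{\mathbf{W}}$, so $\tr\mathbf{Q}(z)\to\mathrm{MSE}(\mathbf{s})$ as soon as $\lambda_{\min}\big(\tfrac1m\widetilde{\mathbf{W}}^{H}\mathbf{R}_{\mathcal{S}}\widetilde{\mathbf{W}}\big)$ stays bounded away from $0$. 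Establishing this is the main obstacle: it is a hard-edge (``no eigenvalues below the limiting support'') statement for the reduced model, which holds almost surely under Assumptions~\ref{ass:regime}--\ref{ass:model} exactly because $\liminf k/m>1$ is strict, and which moreover makes the convergence in the bulk uniform enough near $z=0$ for the limits to be interchanged. Putting the three displays together then yields $\mathrm{MSE}(\mathbf{s})-\overline{\mathrm{MSE}}(\mathbf{s})\xrightarrow[n\to\infty]{a.s.}0$. Everything else — the trace lemma, the Sherman--Morrison bookkeeping, the variance bound — is routine; the delicate ingredients are the strict separation at the left edge and the uniformity of the $z\to 0$ limit.
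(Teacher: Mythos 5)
The paper offers no in-text proof of this lemma: it is stated by citation to \cite{silverstein}, so your proposal is effectively a reconstruction of the argument the authors delegate to the literature, and as such it follows the standard route rather than anything the paper spells out. Your outline is sound and the reductions are correct: writing ${\rm diag}({\bf s})={\bf S}^{T}{\bf S}$ to pass to the $k\times k$ model ${\widetilde{\bf W}}^{H}{\bf R}_{\mathcal S}\widetilde{\bf W}$ is legitimate (the columns of ${\bf S}{\bf R}^{1/2}{\bf W}$ are indeed i.i.d.\ $\mathcal{CN}({\bf 0},{\bf S}{\bf R}{\bf S}^{T})$), the equivalence of the fixed-point equation under removal of the zero eigenvalues of ${\bf R}\,{\rm diag}({\bf s})$ is correct, and the three-step scheme (well-posedness of $\delta$, deterministic equivalent of ${\rm tr}\,{\bf Q}(z)$ for $z<0$ via Sherman--Morrison, the trace lemma and a Poincar\'e--Nash variance bound of order $m^{-2}$ with Borel--Cantelli, then $z\uparrow 0$ via a hard-edge argument) is exactly how such results are proved. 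Two caveats are worth flagging. First, your well-posedness and compactness claims for $\delta$ require ${\rm rank}({\bf R}_{\mathcal S})>m$ and enough eigenvalues of ${\bf R}_{\mathcal S}$ bounded away from zero; Assumption~2 only controls $\|{\bf R}\|$ and $\frac1n{\rm tr}\,{\bf R}$, and by Cauchy interlacing these do not by themselves prevent an adversarial selection ${\bf s}$ from making ${\bf R}_{\mathcal S}$ ill-conditioned, so an implicit regularity of the selected submatrix is being used (the paper glosses this as well). Second, your identification of the $z\uparrow 0$ step as the delicate one is apt: convergence of ${\rm MSE}({\bf s})={\rm tr}\,\boldsymbol{\Sigma}({\bf s})$ needs the smallest eigenvalue of $\frac1m\widetilde{\bf W}^{H}{\bf R}_{\mathcal S}\widetilde{\bf W}$ to stay bounded away from zero almost surely, i.e.\ a Bai--Silverstein ``no eigenvalues below the support'' statement of the type the paper only invokes (through Assumption~3 and \cite{BAI99}) for the WEV; strictly speaking the MSE lemma as stated under Assumptions~1--2 alone leans on the same ingredient, so your sketch is, if anything, more explicit about the required hypotheses than the paper's citation.
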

\begin{lemma}\cite[Proposition 4.2]{dumont-07}
    Let $\delta$ be defined as in \eqref{eq:delta}.
    Define $\overline{\rm LCE}({\bf s})$ as
    $$
\overline{\rm LCE}({\bf s})= -\frac{1}{m}\log\det\left({\bf I}_n+\delta{\bf R}{\rm diag}({\bf s})\right) +\log(c\delta)+1\
    $$
    Then, under assumptions \ref{ass:regime} and \ref{ass:model},
    $$
    {\rm LCE}({\bf s}) -\overline{\rm LCE}({\bf s}) \xrightarrow[n\to\infty]{a.s.} 0
    $$
 \end{lemma}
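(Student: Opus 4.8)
\emph{Proof plan.}
Write $\widetilde{\bf R}={\bf R}^{\frac12}{\rm diag}({\bf s}){\bf R}^{\frac12}$ and ${\bf B}_n=\frac1n{\bf W}^{H}\widetilde{\bf R}{\bf W}\in\mathbb{C}^{m\times m}$, so that ${\rm LCE}({\bf s})=-\frac1m\log\det{\bf B}_n$, with ${\bf B}_n$ a.s.\ invertible since $k>m$. The plan proceeds in three steps: first rewrite $-\frac1m\log\det{\bf B}_n$ as an integral of a normalized resolvent trace, then replace that trace by the deterministic equivalent supplied by a regularized version of Lemma~1, and finally evaluate the resulting deterministic integral in closed form. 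For the first step, applying the scalar identity $\log x=\int_0^{\infty}\!\big(\frac1{1+t}-\frac1{x+t}\big)\,dt$ to each eigenvalue of ${\bf B}_n$ gives
\[
{\rm LCE}({\bf s})=\int_0^{\infty}\Big(\frac1m\tr\big({\bf B}_n+t{\bf I}_m\big)^{-1}-\frac1{1+t}\Big)\,dt .
\]
Two spectral facts, holding almost surely for all large $n$, make this legitimate and will be reused below: $\lambda_{\max}({\bf B}_n)$ is bounded (because $\|\widetilde{\bf R}\|\le\|{\bf R}\|$ is bounded by Assumption~\ref{ass:model} and $\frac1n\|{\bf W}\|^2$ is a.s.\ bounded), and $\lambda_{\min}({\bf B}_n)$ is bounded away from $0$. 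These give $\frac1m\tr({\bf B}_n+t{\bf I}_m)^{-1}\le\min\{1/t,\lambda_{\min}({\bf B}_n)^{-1}\}$, so the integrand is dominated, eventually a.s., by a fixed integrable function of $t$ (bounded on $[0,1]$ and $O(t^{-2})$ at infinity).

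For the second step, fix $t\ge0$ and observe $\frac1m\tr({\bf B}_n+t{\bf I}_m)^{-1}=\frac nm\,\tr\big({\bf W}^{H}\widetilde{\bf R}{\bf W}+nt{\bf I}_m\big)^{-1}$, which up to the factor $n/m\to c$ is exactly the quantity governed by the Silverstein-type result behind Lemma~1, now carrying the extra regularization $nt{\bf I}_m$ (equivalently, it is the resolvent of $\frac1n{\bf W}^{H}\widetilde{\bf R}{\bf W}$ evaluated at $-t$). That machinery produces a deterministic equivalent $g_n(t)$ with $\frac1m\tr({\bf B}_n+t{\bf I}_m)^{-1}-g_n(t)\xrightarrow[n\to\infty]{a.s.}0$, expressed through the unique positive solution $\delta(t)$ of the $t$-regularized analogue of \eqref{eq:delta}; in particular $\delta(0)=\delta$ and $g_n(0)=\frac nm\delta\to c\delta$, which is consistent with Lemma~1 because ${\rm MSE}({\bf s})=\frac1n\tr{\bf B}_n^{-1}$. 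Combining this pointwise-in-$t$ convergence with the domination from the first step, dominated convergence yields
\[
{\rm LCE}({\bf s})-\int_0^{\infty}\Big(g_n(t)-\frac1{1+t}\Big)\,dt\xrightarrow[n\to\infty]{a.s.}0 .
\]

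For the third step it remains to evaluate $\int_0^{\infty}(g_n(t)-\frac1{1+t})\,dt$ and identify it with $\overline{\rm LCE}({\bf s})$ (up to the $o(1)$ from $n/m\to c$). This is a calculus exercise: differentiating $t\mapsto-\frac1m\log\det({\bf I}_n+\delta(t)\widetilde{\bf R})$, $t\mapsto\log\delta(t)$ and $t\mapsto\log(1+t)$, and using the $t$-regularized fixed-point relation to cancel all contributions proportional to $\delta'(t)$ (the stationarity of $\delta(t)$), one exhibits a closed-form antiderivative of $g_n(t)-\frac1{1+t}$. Integrating from $0$ to $\infty$, the value at $t=0$ supplies the main term $-\frac1m\log\det({\bf I}_n+\delta\widetilde{\bf R})$ together with $\log\delta$, while the logarithmic behaviour of the antiderivative as $t\to\infty$ cancels against the $\log(1+t)$ reference and leaves the additive constant $\log c+1$. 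Finally, by Sylvester's determinant identity $\log\det({\bf I}_n+\delta\widetilde{\bf R})=\log\det({\bf I}_n+\delta{\bf R}\,{\rm diag}({\bf s}))$, so the deterministic integral equals $\overline{\rm LCE}({\bf s})$ exactly.

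The main obstacle will be the lower bound on $\lambda_{\min}({\bf B}_n)$, which underpins the domination near $t=0$ and is precisely why Assumption~\ref{ass:regime} demands $\liminf k/m>1$: writing ${\bf B}_n=\frac1n\widetilde{\bf H}^{H}\widetilde{\bf H}$ with $\widetilde{\bf H}$ the $k\times m$ matrix formed by the selected rows of ${\bf H}$, one has $\widetilde{\bf H}$ equal in law to ${\bf R}_{\mathcal{S}\mathcal{S}}^{\frac12}{\bf G}$ with ${\bf G}$ a $k\times m$ i.i.d.\ Gaussian matrix and ${\bf R}_{\mathcal{S}\mathcal{S}}$ the selected principal submatrix of ${\bf R}$, whence $\lambda_{\min}({\bf B}_n)\ge\frac kn\,\lambda_{\min}({\bf R}_{\mathcal{S}\mathcal{S}})\,\frac{\sigma_{\min}({\bf G})^2}{k}$; the Bai--Yin bound keeps $\sigma_{\min}({\bf G})^2/k$ bounded away from $0$ a.s.\ exactly when $k/m$ stays bounded away from $1$, and $\lambda_{\min}({\bf R}_{\mathcal{S}\mathcal{S}})\ge\lambda_{\min}({\bf R})$ by Cauchy interlacing, so matters reduce to a (mild, here tacitly needed) non-degeneracy of ${\bf R}$. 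The other delicate point is the bookkeeping of boundary and growth terms in the third step, which must be carried out carefully to obtain the constant $\log c+1$ and not some other constant. An alternative to the whole argument is to quote the known deterministic equivalent of the ergodic mutual information $\frac1m\log\det({\bf I}_m+\rho{\bf B}_n)$ and let $\rho\to\infty$ after subtracting $\log\rho$ (the spectrum of ${\bf B}_n$ staying bounded away from $0$); I prefer the resolvent-integral route because it reuses only Lemma~1.
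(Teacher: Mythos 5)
The paper itself offers no proof of this lemma: the statement is quoted verbatim from \cite[Proposition 4.2]{dumont-07}, so your derivation is, by construction, a different route --- essentially a self-contained re-derivation of the cited result from the Stieltjes-transform machinery behind Lemma 1. Your plan is the standard and correct way such log-det equivalents are obtained: the identity ${\rm LCE}({\bf s})=\int_0^\infty\bigl(\frac{1}{m}\tr({\bf B}_n+t{\bf I}_m)^{-1}-\frac{1}{1+t}\bigr)dt$ is right, the $t$-regularized deterministic equivalent $g_n(t)$ exists under Assumptions 1--2, and the closed-form evaluation in your third step does produce exactly $-\frac{1}{m}\log\det({\bf I}_n+\delta{\bf R}\diag({\bf s}))+\log(c\delta)+1$ (a quick sanity check with ${\bf R}={\bf I}_n$, ${\bf s}={\bf 1}$ recovers the Marchenko--Pastur value $(c-1)\log\frac{c-1}{c}+1$, matching $\delta=\frac{1}{c-1}$). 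What your route buys is that only Lemma-1-type resolvent results are invoked; what the citation buys the authors is that all the bookkeeping you defer --- the explicit antiderivative and cancellation of the $\delta'(t)$ terms, and the upgrade from ``a.s.\ for each fixed $t$'' to ``a.s.\ for all $t$ simultaneously'' (handled by a countable dense set plus monotonicity/Lipschitz continuity of resolvent traces, or by a concentration argument) --- is already done in \cite{dumont-07}.

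Two caveats you should make explicit if you carry the plan out. First, your domination near $t=0$ rests on $\lambda_{\min}({\bf B}_n)$ being bounded away from zero, and your own chain of inequalities reduces this to $\lambda_{\min}({\bf R}_{\mathcal{S}\mathcal{S}})$ being bounded away from zero; Assumption 2 only guarantees $\sup_n\|{\bf R}\|<\infty$ and $\inf_n\frac{1}{n}\tr{\bf R}>0$, which does not rule out a singular ${\bf R}$, so you are adding a (reasonable, arguably implicitly required) nondegeneracy hypothesis that the paper never states. Second, your step 3 is asserted rather than executed; it is where the constant $\log c+1$ comes from, and the cancellation of boundary terms at $t\to\infty$ against $\log(1+t)$ must be written out (or one must verify that the $O(\frac{m}{n}-\frac1c)$ discrepancy between $n/m$ and $c$ is uniformly $o(1)$ in the integral) before the argument is complete. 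Neither point is a flaw in the approach --- both are standard --- but as written the proposal is a sound plan rather than a finished proof, whereas the paper discharges the entire burden to the reference.
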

 To obtain an asymptotic equivalent for the WEV, the following technical assumption is additonally needed:
 \begin{assumption}
     Let $\lambda_{1}({\bf s}),\cdots,\lambda_{n}({\bf s})$ be the eigenvalues of ${\bf R}^{\frac{1}{2}}{\rm diag}({\bf s}){\bf R}^{\frac{1}{2}}$.
     We assume that the probability measure $\frac{1}{n}\sum_{i=1}^n \boldsymbol{\delta}_{\lambda_i}({\bf s})$ converge weakly to a probability measure $\mu_s$. Moreover, we assume all the eigenvalues of ${\bf R}^{\frac{1}{2}}{\rm diag}({\bf s}){\bf R}^{\frac{1}{2}}$ to be almost surely contain in the support of $\mu_s$ $Supp(\mu_s)$ or equivalently:
     $$
     \max_{i=1,\cdots,n} {\rm dist}(\lambda_i({\bf R}^{\frac{1}{2}}{\rm diag}({\bf s}){\bf R}^{\frac{1}{2}}),\mu_s))\xrightarrow[n\to\infty]{a.s.} 0
     $$
     
     \label{ass:eigenvalue}
 \end{assumption}
 With this assumption at hand, the {\rm WEV} can be approximated as
 \begin{lemma}\cite{BAI99}
     Under assumptions \ref{ass:regime}, \ref{ass:model} and \ref{ass:eigenvalue},
     $$
     {\rm WEV}({\bf s}) - \overline{\rm WEV}({\bf s}) \xrightarrow[n\to\infty]{a.s.} 0
     $$
     where $\overline{\rm WEV}({\bf s})$ is given by: 
     $$
     \overline{\rm WEV}({\bf s})=-\frac{1}{\eta} +\frac{1}{m} \tr \left[\mathbf{R}\diag\left(\mathbf{s}\right)\left(\mathbf{I}+\eta \mathbf{R}\diag\left(\mathbf{s}\right)\right)^{-1}\right]
     $$
     and $\eta$ is the solution to the following equation in $\left(0,\infty\right)$
     $$
     \eta^2=\left(\frac{1}{m}\tr\left[\left({\bf R}^{\frac{1}{2}}{\rm diag}({\bf s}){\bf R}^{\frac{1}{2}}\right)^2\left({\bf I}_n+\eta{\bf R}^{\frac{1}{2}}{\rm diag}({\bf s}){\bf R}^{\frac{1}{2}}\right)^{-2}\right]\right)^{-1}.
     $$
 \end{lemma}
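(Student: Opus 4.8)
The plan is to recognise ${\rm WEV}(\mathbf{s})$ as the reciprocal of the smallest eigenvalue of a correlated sample-covariance matrix, to prove that this smallest eigenvalue converges almost surely to the left edge of the limiting spectral support, and then to make that edge explicit in terms of $\eta$. Write $\widetilde{\mathbf{R}}=\mathbf{R}^{\frac12}\diag(\mathbf{s})\mathbf{R}^{\frac12}$ and $\mathbf{G}=\frac{1}{m}\mathbf{W}^{H}\widetilde{\mathbf{R}}\mathbf{W}\in\mathbb{C}^{m\times m}$, so that ${\rm WEV}(\mathbf{s})=1/\lambda_{\min}(\mathbf{G})$. By Assumption~\ref{ass:regime}, $\mathrm{rank}(\widetilde{\mathbf{R}})=k\ge m$ for $n$ large, hence $\mathbf{G}$ is almost surely positive definite. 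Diagonalising $\widetilde{\mathbf{R}}=\mathbf{U}\boldsymbol{\Lambda}\mathbf{U}^{H}$ with $\mathbf{U}\in\mathbb{C}^{n\times k}$ isometric and $\boldsymbol{\Lambda}\in\mathbb{C}^{k\times k}$ positive definite, one gets $\mathbf{G}=\frac{1}{m}\widetilde{\mathbf{W}}^{H}\boldsymbol{\Lambda}\widetilde{\mathbf{W}}$ where $\widetilde{\mathbf{W}}=\mathbf{U}^{H}\mathbf{W}\in\mathbb{C}^{k\times m}$ still has i.i.d.\ standard complex Gaussian entries (since $\mathbf{U}$ is deterministic). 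Thus $\mathbf{G}$ is a separable sample-covariance matrix of dimension $m$ built from $k\ge m$ weighted i.i.d.\ rank-one terms; it falls squarely into the Marchenko--Pastur--Silverstein framework with aspect ratio $m/k<1$ and population spectral measure the renormalised non-zero part of $\mu_{s}$ (Assumption~\ref{ass:eigenvalue}).

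I would then use the Silverstein fixed-point theorem underlying the earlier lemmas: the Stieltjes transform $\underline{m}(z)$ of the limiting spectral distribution $F$ of $\mathbf{G}$ is the unique solution in $\mathbb{C}^{+}$ of $z=-\underline{m}^{-1}+\frac{1}{m}\tr[\widetilde{\mathbf{R}}(\mathbf{I}_{n}+\underline{m}\widetilde{\mathbf{R}})^{-1}]$ (the zero eigenvalues of $\widetilde{\mathbf{R}}$ drop out of the trace). By the Silverstein--Choi description of the support, the real function $z(\underline{m})=-\underline{m}^{-1}+\frac{1}{m}\tr[\widetilde{\mathbf{R}}(\mathbf{I}_{n}+\underline{m}\widetilde{\mathbf{R}})^{-1}]$ parametrises the complement of $\mathrm{supp}(F)$ on the set where $z'>0$, and the boundary points of $\mathrm{supp}(F)$ are its critical values. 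Differentiating gives $z'(\underline{m})=\underline{m}^{-2}-\frac{1}{m}\tr[\widetilde{\mathbf{R}}^{2}(\mathbf{I}_{n}+\underline{m}\widetilde{\mathbf{R}})^{-2}]$, so $z'(\underline{m})=0$ is exactly the equation defining $\eta$ in the statement; since $k/m>1$, an elementary sign analysis shows $z'$ vanishes at a single $\eta\in(0,\infty)$, that $z$ is increasing on $(0,\eta)$ with $z(0^{+})=-\infty$, and hence that the left edge of $\mathrm{supp}(F)$ equals $\lambda^{-}=z(\eta)=-\eta^{-1}+\frac{1}{m}\tr[\widetilde{\mathbf{R}}(\mathbf{I}_{n}+\eta\widetilde{\mathbf{R}})^{-1}]$ (the trace coinciding with that of $\mathbf{R}\diag(\mathbf{s})(\mathbf{I}+\eta\mathbf{R}\diag(\mathbf{s}))^{-1}$, the two matrices being similar). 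The same monotonicity shows the $\eta$-equation has a unique positive root, so $\overline{\rm WEV}(\mathbf{s})$ is well defined.

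The main obstacle is the passage from weak convergence of the empirical spectral distribution of $\mathbf{G}$ — which alone only gives $\limsup_{n}\lambda_{\min}(\mathbf{G})\le\lambda^{-}$ almost surely — to the almost-sure convergence $\lambda_{\min}(\mathbf{G})\xrightarrow[n\to\infty]{a.s.}\lambda^{-}$, i.e.\ to ruling out eigenvalues that leak below the bulk toward $0$. This is precisely what the ``no eigenvalues outside the support'' and exact-separation results of \cite{BAI99} deliver, and their hypotheses are exactly those collected in Assumptions~\ref{ass:model} and~\ref{ass:eigenvalue}: i.i.d.\ entries with finite fourth moment, $\|\widetilde{\mathbf{R}}\|=\|\mathbf{R}\|$ uniformly bounded (and, after the reduction above, $\boldsymbol{\Lambda}$ uniformly bounded away from $0$, so that $\mathbf{G}$ has a genuine positive left edge), convergence of the empirical eigenvalue distribution of $\widetilde{\mathbf{R}}$ to $\mu_{s}$, and asymptotic containment of those eigenvalues in $\mathrm{supp}(\mu_{s})$ so that no population outlier can produce a sample outlier. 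Granting this, $\lambda_{\min}(\mathbf{G})\xrightarrow[n\to\infty]{a.s.}\lambda^{-}$, whence ${\rm WEV}(\mathbf{s})=1/\lambda_{\min}(\mathbf{G})\xrightarrow[n\to\infty]{a.s.}1/\lambda^{-}$; substituting the explicit value $\lambda^{-}=z(\eta)$ found above then yields the deterministic equivalent $\overline{\rm WEV}(\mathbf{s})$, that is, ${\rm WEV}(\mathbf{s})-\overline{\rm WEV}(\mathbf{s})\to0$ almost surely. Apart from the almost-sure edge convergence, everything is Silverstein fixed-point calculus together with one-variable calculus on $z(\underline{m})$.
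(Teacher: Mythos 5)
Your overall route is exactly the one the paper's citation stands in for: the paper gives no argument of its own beyond pointing to Bai--Silverstein, and your assembly --- write ${\rm WEV}({\bf s})=1/\lambda_{\min}({\bf G})$ with ${\bf G}=\frac{1}{m}{\bf W}^{H}{\bf R}^{\frac12}\diag({\bf s}){\bf R}^{\frac12}{\bf W}$, reduce by unitary invariance of the Gaussian ${\bf W}$, identify the limiting left edge through the Silverstein--Choi parametrisation $z(u)=-\frac1u+\frac1m\tr\bigl[{\bf R}^{\frac12}\diag({\bf s}){\bf R}^{\frac12}\bigl({\bf I}+u{\bf R}^{\frac12}\diag({\bf s}){\bf R}^{\frac12}\bigr)^{-1}\bigr]$, whose unique positive critical point is precisely the $\eta$ of the statement, and then upgrade weak convergence to almost-sure convergence of $\lambda_{\min}$ via the no-eigenvalues-outside-the-support and exact-separation theorems of \cite{BAI99} --- is the canonical proof. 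The one-variable calculus (uniqueness of $\eta$, $z$ increasing then decreasing on $(0,\infty)$, $z(\eta)>0$ because $\liminf k/m>1$) is sound.

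The genuine problem is your final sentence. What your argument establishes is $\lambda_{\min}({\bf G})\to z(\eta)$ a.s., hence ${\rm WEV}({\bf s})=1/\lambda_{\min}({\bf G})\to 1/z(\eta)$; but the $\overline{\rm WEV}$ displayed in the lemma \emph{is} $z(\eta)$, and $1/z(\eta)\neq z(\eta)$ in general, so ``substituting $\lambda^{-}=z(\eta)$ then yields $\overline{\rm WEV}$'' is not correct arithmetic. A sanity check with ${\bf R}={\bf I}_n$ and ${\bf s}={\bf 1}$: the $\eta$-equation gives $\eta=1/(\sqrt{c}-1)$ and $z(\eta)=(\sqrt{c}-1)^{2}$, which is indeed the a.s.\ limit of $\lambda_{\min}\bigl(\frac1m{\bf W}^{H}{\bf W}\bigr)$, whereas ${\rm WEV}\to(\sqrt{c}-1)^{-2}$. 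So either you must invert (the deterministic equivalent of ${\rm WEV}$ as defined is $\bigl[-\frac1\eta+\frac1m\tr[{\bf R}\diag({\bf s})({\bf I}+\eta{\bf R}\diag({\bf s}))^{-1}]\bigr]^{-1}$), or you should state explicitly that the lemma as printed, combined with the paper's definition ${\rm WEV}=1/\lambda_{\min}$, carries an inversion typo and that the displayed $\overline{\rm WEV}$ is the equivalent of $1/{\rm WEV}$; silently identifying the two is a gap. A smaller point: your parenthetical claim that after the isometric reduction $\boldsymbol{\Lambda}$ is ``uniformly bounded away from $0$'' does not follow from Assumptions 2--3 (the nonzero eigenvalues of ${\bf R}^{\frac12}\diag({\bf s}){\bf R}^{\frac12}$ are those of a principal submatrix of ${\bf R}$ and may be arbitrarily small); fortunately it is not needed, since positivity of the limiting left edge already follows from your observation that $z(u)\sim(\lim k/m-1)/u>0$ as $u\to\infty$, whence $z(\eta)>0$.
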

 \subsection{Blind selection techniques}
The asymptotic analysis carried out in the previous section is now leveraged to build efficient blind methods for measurement selection. Our blind approaches are based on solving the following selection problem:
\begin{equation}
    \label{eq:blind}
\begin{split}
    &{\bf s}^*= \underset{\mathbf{s}}{\text{argmin}} \overline{f}({\bf s}) \\ 
& \text{s.t.} \:\: \:\: \:\: \:\: \:\: \:\: \:\:\mathbf{1}^T \mathbf{s}=k  \\
& \:\: \:\: \:\: \:\: \:\: \:\: \:\: \:\: \: \: \: s_i \in \left \{ 0,1 \right \}, i=1,\cdots,n.
\end{split}
\end{equation}
where $\overline{f}$ refers to one of the asymptotic approximations for ${\rm MSE}$, ${\rm LCE}$ or the ${\rm WEV}$, that have been computed in the previous section. We present in the sequel two different methods. The first one, termed blind convex relaxation technique relies on the use of the convex relaxation approach used in \cite{boyd}, while the second one, is merely based on the use of a greedy algorithm that solves the Problem in  \eqref{eq:blind}
\begin{enumerate}
    \item Blind convex relaxation technique:
        This method builds upon the use of the convex relaxation concept. It replaces the boolean constraints in \eqref{eq:blind} by the convex constraints $0\leq s_i\leq 1$. In doing so, we obtain the following optimization problem:
\begin{equation}
    \label{eq:blind_cvx}
\begin{split}
    &{\bf s}^*= \underset{\mathbf{s}}{\text{argmin}} \overline{f}({\bf s}) \\ 
& \text{s.t.} \:\: \:\: \:\: \:\: \:\: \:\: \:\:\mathbf{1}^T \mathbf{s}=k  \\
& \:\: \:\: \:\: \:\: \:\: \:\: \:\: \:\: \: \: \: 0\leq s_i\leq 1 , i=1,\cdots,n.
\end{split}
\end{equation}
It is worth mentioning that, although the non-convex relaxation techniques are now replaced by the convex ones $0\leq s_i\leq 1$, it is not clear whether the obtained problem \eqref{eq:blind_cvx} remains convex. This is because we are not sure whether the objective function, representing the almost sure deterministic equivalent of one of the error measures is still \eqref{eq:blind_cvx} convex. The following theorem answers this question and establishes the convexity of the deterministic
approximation of the ${\rm MSE}$, ${\rm LCE}$ and ${\rm WEV}$. 
\begin{theorem}
    Define ${\overline{f}}$ as:
    \begin{align*}
        {\overline{f}}&:\mathbb{R}_{+}^n\to \mathbb{R}_{+}\\
                      &{\bf s}\mapsto \overline{f}({\bf s})
    \end{align*}
    where $\overline{f}$ is either $\overline{\rm MSE}({\bf s})$, $\overline{\rm LCE}({\bf s})$ or $\overline{\rm WEV}({\bf s})$. Then, $\overline{f}$ is convex in $\mathbb{R}_{+}^{n}$.
    \label{th:convexity}
\end{theorem}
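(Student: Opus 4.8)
The plan is to funnel all three deterministic equivalents through the single linear map $\mathbf{s}\mapsto\mathbf{T}(\mathbf{s}):=\mathbf{R}^{\frac12}\diag(\mathbf{s})\mathbf{R}^{\frac12}$, which sends $\mathbb{R}_+^n$ into the positive semidefinite cone, and then to invoke a handful of elementary convex‑analytic facts: on $\{\mathbf{X}\succeq0\}$ the map $\mathbf{X}\mapsto\tr[(\mathbf{I}_n+\mathbf{X})^{-1}]$ is convex and nonincreasing while $\mathbf{X}\mapsto\log\det(\mathbf{I}_n+\mathbf{X})$ is concave (operator convexity of $t\mapsto(1+t)^{-1}$ and operator concavity of $t\mapsto\log(1+t)$ on $(-1,\infty)$, followed by taking traces); the perspective of a concave function is jointly concave; a partial infimum or supremum of a jointly concave function is concave; and the reciprocal of a strictly positive concave function is convex. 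The workhorse will be
\[
\Xi(\rho,\mathbf{s}):=\rho\,\tr\!\big[\mathbf{T}(\mathbf{s})\big(\rho\mathbf{I}_n+\mathbf{T}(\mathbf{s})\big)^{-1}\big]-m\rho ,\qquad \rho>0,\ \mathbf{s}\in\mathbb{R}_+^n ,
\]
which I claim is jointly concave in $(\rho,\mathbf{s})$: the map $\rho\,\tr[\mathbf{X}(\rho\mathbf{I}_n+\mathbf{X})^{-1}]$ is exactly the perspective in $\rho$ of $\mathbf{X}\mapsto\tr[\mathbf{X}(\mathbf{I}_n+\mathbf{X})^{-1}]=n-\tr[(\mathbf{I}_n+\mathbf{X})^{-1}]$, a concave function of $\mathbf{X}\succeq0$, so by the facts above and the linearity of $\mathbf{T}(\cdot)$ it is jointly concave in $(\rho,\mathbf{s})$, and subtracting the linear term $m\rho$ preserves this.

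Next I would rewrite the fixed‑point equations. Since $\mathbf{A}\mathbf{B}$ and $\mathbf{B}\mathbf{A}$ have the same characteristic polynomial, \eqref{eq:delta} for $\delta=\overline{\rm MSE}(\mathbf{s})$ collapses to $\tr[(\mathbf{I}_n+\delta\mathbf{T}(\mathbf{s}))^{-1}]=n-m$, and with $\rho=1/\delta$ this reads $\tr[\mathbf{T}(\mathbf{s})(\rho\mathbf{I}_n+\mathbf{T}(\mathbf{s}))^{-1}]=m$, i.e.\ $\Xi(\rho,\mathbf{s})=0$. For fixed $\mathbf{s}$ the map $\rho\mapsto\Xi(\rho,\mathbf{s})=\sum_i\rho\mu_i/(\rho+\mu_i)-m\rho$, where $\mu_i=\mu_i(\mathbf{s})\ge0$ are the eigenvalues of $\mathbf{T}(\mathbf{s})$, is strictly concave, vanishes at $\rho=0^{+}$, has slope $\operatorname{rank}\mathbf{T}(\mathbf{s})-m$ there, and tends to $-\infty$; hence when $\operatorname{rank}\mathbf{T}(\mathbf{s})>m$ it has a unique positive zero $\rho_\star(\mathbf{s})$ with $\{\rho\ge0:\Xi(\rho,\mathbf{s})\ge0\}=[0,\rho_\star(\mathbf{s})]$, so $\overline{\rm MSE}(\mathbf{s})=1/\rho_\star(\mathbf{s})$, and when $\operatorname{rank}\mathbf{T}(\mathbf{s})\le m$ one has $\rho_\star(\mathbf{s})=0$, $\overline{\rm MSE}(\mathbf{s})=+\infty$. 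Now the set $\{(\mathbf{s},\tau)\in\mathbb{R}_+^n\times\mathbb{R}_+:\Xi(\tau,\mathbf{s})\ge0\}$ is a superlevel set of the jointly concave $\Xi$, hence convex; but it equals $\{(\mathbf{s},\tau):\tau\le\rho_\star(\mathbf{s})\}$, the hypograph of $\rho_\star$, so $\rho_\star$ is concave and nonnegative on $\mathbb{R}_+^n$, and therefore $\overline{\rm MSE}=1/\rho_\star$ is convex (with the value $+\infty$ wherever $\rho_\star=0$, which is consistent with convexity in the extended‑real sense).

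The WEV uses the same $\Xi$. A direct computation gives $\partial_\rho\Xi(\rho,\mathbf{s})=\tr[\mathbf{T}(\mathbf{s})^2(\rho\mathbf{I}_n+\mathbf{T}(\mathbf{s}))^{-2}]-m$, so with $\rho=1/\eta$ the fixed‑point equation for $\eta$ is precisely $\partial_\rho\Xi(\rho,\mathbf{s})=0$; thus $1/\eta$ is the unique maximizer of the concave map $\rho\mapsto\Xi(\rho,\mathbf{s})$, and a short manipulation shows $-\tfrac1\eta+\tfrac1m\tr[\mathbf{R}\diag(\mathbf{s})(\mathbf{I}_n+\eta\mathbf{R}\diag(\mathbf{s}))^{-1}]=\tfrac1m\,\Xi(1/\eta,\mathbf{s})=\tfrac1m\sup_{\rho>0}\Xi(\rho,\mathbf{s})$, i.e.\ $1/\overline{\rm WEV}(\mathbf{s})=\tfrac1m\sup_{\rho>0}\Xi(\rho,\mathbf{s})$ (the deterministic equivalent of the smallest eigenvalue). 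Being a partial supremum of the jointly concave $\Xi$, this is concave in $\mathbf{s}$ (and strictly positive exactly when $\operatorname{rank}\mathbf{T}(\mathbf{s})>m$), so $\overline{\rm WEV}$ is convex. For the LCE I would instead note that $\delta(\mathbf{s})$ is the global minimizer over $\delta>0$ of $P(\delta,\mathbf{s}):=\log\det(\mathbf{I}_n+\delta\mathbf{T}(\mathbf{s}))-m\log\delta$ — its stationarity condition in $\delta$ is exactly $\tr[(\mathbf{I}_n+\delta\mathbf{T}(\mathbf{s}))^{-1}]=n-m$, and $P(\cdot,\mathbf{s})$ is coercive at $0^{+}$ and at $+\infty$ when $\operatorname{rank}\mathbf{T}(\mathbf{s})>m$ — whence $\overline{\rm LCE}(\mathbf{s})=\log c+1-\tfrac1m\inf_{\delta>0}P(\delta,\mathbf{s})$. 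For each fixed $\delta$, $\mathbf{s}\mapsto P(\delta,\mathbf{s})$ is concave (a concave log‑det composed with the affine map $\mathbf{s}\mapsto\mathbf{I}_n+\delta\mathbf{T}(\mathbf{s})$, minus a constant), so the pointwise infimum is concave and $\overline{\rm LCE}$ is convex.

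I expect the delicate points to be two. First, the bookkeeping in the WEV step: carefully verifying the identity between the implicit $\eta$‑characterization of $\overline{\rm WEV}$ and $\tfrac1m\sup_{\rho>0}\Xi(\rho,\mathbf{s})$, and invoking Assumption~\ref{ass:eigenvalue} to be sure that the smallest eigenvalue of $\tfrac1m\mathbf{W}^{H}\mathbf{R}^{\frac12}\diag(\mathbf{s})\mathbf{R}^{\frac12}\mathbf{W}$ really converges to the edge of the bulk rather than to an outlier, so that the displayed quantity is indeed the right deterministic equivalent to reciprocate. Second, the degenerate regime $\operatorname{rank}\mathbf{T}(\mathbf{s})\le m$, where all three measures are $+\infty$ and the conclusions must be read for extended‑real‑valued convex functions. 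Everything else reduces to applying the operator‑convexity facts and the perspective / partial‑optimization rules to the linear parametrization $\mathbf{s}\mapsto\mathbf{T}(\mathbf{s})$, which morally reflects the fact that for every fixed $\mathbf{W}$ the exact error measures $\tr[(\mathbf{W}^{H}\mathbf{T}(\mathbf{s})\mathbf{W})^{-1}]$, $-\tfrac1m\log\det(\tfrac1n\mathbf{W}^{H}\mathbf{T}(\mathbf{s})\mathbf{W})$ and $1/\lambda_{\min}(\tfrac1m\mathbf{W}^{H}\mathbf{T}(\mathbf{s})\mathbf{W})$ are already convex in $\mathbf{s}$.
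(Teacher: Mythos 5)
Your argument is correct in substance, but it takes a genuinely different route from the paper's. The paper proves Theorem \ref{th:convexity} by a dimension-inflation (Kronecker blow-up) argument borrowed from the literature: it replaces $\mathbf{R}(\mathbf{s})=\mathbf{R}^{\frac12}\diag(\mathbf{s})\mathbf{R}^{\frac12}$ by $\mathbf{I}_p\otimes\mathbf{R}(\mathbf{s})$ with an $np\times mp$ i.i.d.\ Gaussian $\widetilde{\mathbf{W}}$, observes that the block-diagonal structure leaves the fixed-point equation -- hence the deterministic equivalent -- unchanged for every $p$, and concludes that $\overline{\rm MSE}(\mathbf{s})$ is the pointwise limit (as $p\to\infty$) of the functions $\mathbf{s}\mapsto\tr[(\widetilde{\mathbf{W}}^H\widetilde{\mathbf{R}}(\mathbf{s})\widetilde{\mathbf{W}})^{-1}]$, each of which is convex in $\mathbf{s}$ for fixed $\widetilde{\mathbf{W}}$; convexity passes to the limit, and the LCE and WEV are declared to follow ``by the same argument.'' You instead argue deterministically from the fixed-point equations themselves: joint concavity of the perspective-type function $\Xi(\rho,\mathbf{s})$, the superlevel-set/hypograph identification giving concavity of $\rho_\star$ and convexity of $\overline{\rm MSE}=1/\rho_\star$, and the variational characterizations $\overline{\rm LCE}=\log c+1-\frac1m\inf_{\delta>0}P(\delta,\mathbf{s})$ and $\frac1m\sup_{\rho>0}\Xi(\rho,\mathbf{s})$ for the $\eta$-equation, with concavity preserved under partial optimization. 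Your computations (the rewriting of \eqref{eq:delta} as $\tr[(\mathbf{I}_n+\delta\mathbf{T}(\mathbf{s}))^{-1}]=n-m$, the identification of the $\eta$-equation with $\partial_\rho\Xi=0$, and the stationarity of $P$ in $\delta$) all check out. What your route buys is a proof that needs no appeal to the almost-sure convergence lemmas, yields existence/uniqueness of the fixed points and explicit variational formulas as by-products, and treats the degenerate rank regime explicitly; what the paper's route buys is brevity and uniformity across the three measures, at the price of implicitly requiring the convergence results (including the spectrum-edge control needed for the WEV) to hold for the inflated model at every $p$.

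One point you should make explicit in the WEV step: what you actually prove is that $-\frac1\eta+\frac1m\tr[\mathbf{R}\diag(\mathbf{s})(\mathbf{I}_n+\eta\mathbf{R}\diag(\mathbf{s}))^{-1}]=\frac1m\sup_{\rho>0}\Xi(\rho,\mathbf{s})$ is \emph{concave} in $\mathbf{s}$, and hence that its reciprocal is convex. The paper's Lemma 3 literally defines $\overline{\rm WEV}$ as the former expression, which is the standard left-edge formula, i.e.\ a deterministic equivalent of $\lambda_{\min}$ itself rather than of ${\rm WEV}=1/\lambda_{\min}$; your reading (reciprocate it) is the one consistent with the definition of ${\rm WEV}$, with the convergence claim ${\rm WEV}(\mathbf{s})-\overline{\rm WEV}(\mathbf{s})\to 0$, and with the paper's own limiting-convexity strategy. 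So this is a definitional inconsistency in the paper rather than a gap in your argument, but state clearly which quantity you call $\overline{\rm WEV}$, since by your own computation the un-reciprocated expression is concave, not convex.
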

\begin{proof}
    See Appendix
\end{proof}
Based on this Theorem, one can thus resort to standard convex tools to solve Problem \eqref{eq:blind_cvx}. This step often requires the knowldege of the gradient of $\overline{f}$ with respect to ${\bf s}$ which can be found in the top of the next page where  $\mathbf{R}_{\mathbf{s}} = \mathbf{R}^{\frac{1}{2}} \diag\left(\mathbf{s}\right)\mathbf{R}^{\frac{1}{2}}$ (A proof can be found in Appendix A)
\begin{figure*}
	\begin{align*}
	 	\frac{\partial \overline{\text{MSE}}\left(\mathbf{s}\right)}{\partial s_i}& =\frac{\partial \delta}{\partial s_i}=\delta_i'=-
	 	\frac{\delta  \left[\mathbf{R}^{\frac{1}{2}}\left(\mathbf{I}+\delta\mathbf{R}_s\right)^{-2}\mathbf{R}^{\frac{1}{2}}\right]_{i,i}}{ \tr\left[\mathbf{R}_{\mathbf{s}}\left(\mathbf{I}+\delta \mathbf{R}_{\mathbf{s}}\right)^{-2}\right]}. \\
	 		\frac{\partial \overline{\text{LCE}}\left(\mathbf{s}\right)}{\partial s_i}
	 		& 	= \left(1-c\right)\frac{\delta_i'}{\delta} + \frac{\delta_i'}{m \delta}\tr\left(\mathbf{I}+\delta \mathbf{R}_s\right)^{-1} 
	 		 - \frac{\delta}{m}	\left[\mathbf{R}^{\frac{1}{2}}\left(\mathbf{I}+\delta\mathbf{R}_s\right)^{-1}\mathbf{R}^{\frac{1}{2}}\right]_{i,i}. \\
	 		\frac{\partial \eta}{\partial s_i} & = \eta'_i  = -\eta  \left( \tr\left[\mathbf{R}_{\mathbf{s}}\left(\mathbf{I}+\eta \mathbf{R}_{\mathbf{s}}\right)^{-2}\right] - \tr\left[\mathbf{R}_{\mathbf{s}}\left(\mathbf{I}+\eta \mathbf{R}_{\mathbf{s}}\right)^{-3}\right]\right)^{-1}  \left( \left[\mathbf{R}^{\frac{1}{2}}\left(\mathbf{I}+\eta\mathbf{R}_s\right)^{-2}\mathbf{R}^{\frac{1}{2}}\right]_{i,i}- \left[\mathbf{R}^{\frac{1}{2}}\left(\mathbf{I}+\eta\mathbf{R}_s\right)^{-3}\mathbf{R}^{\frac{1}{2}}\right]_{i,i}\right). \\
	 		 \frac{\partial \overline{\text{WEV}}\left(\mathbf{s}\right)}{\partial s_i} & =-\frac{\left(c-1\right)\eta_i'}{\eta^2}+\frac{\eta_i'}{m\eta^2}  \left[2\tr\left(\mathbf{I}+\eta\mathbf{R}_s\right)^{-1}-\tr\left(\mathbf{I}+\eta\mathbf{R}_s\right)^{-2}\right] +\frac{1}{m}\left[\mathbf{R}^{\frac{1}{2}}\left(\mathbf{I}+\eta\mathbf{R}_s\right)^{-2}\mathbf{R}^{\frac{1}{2}}\right]_{i,i}.
	\end{align*}
\end{figure*}
The solution  of \eqref{eq:blind_cvx} is a vector with real positive values not necessarily zeros and ones. To obtain the indexes of the selected measurements, we should order its entries and then set the greatest ones to 1 and set the remaining to zero. \par 
    \item Greedy blind algorithms:
        Greedy algorithms have been widely applied to the framework of wireless communications, particularly in scheduling where the aim is to select the set of users that maximizes a certain utility function \cite{sigdel}. The use of greedy algorithms for measurements selection is, however, less common. 
        In order to stress  the wide scope of applicability of the proposed algorithm, we consider here the problem of  selecting the index of measurements that minimizes a pre-defined error measure $f\left(\mathcal{H},\mathcal{S}\right)$, where $\mathcal{H}$ is some information about the measurement matrix $\mathbf{H}$ \footnote{$\mathcal{H}$ could be for example the statistics given by $\mathbf{R}$ or the full channel matrix $\mathbf{H}$.} and $\mathcal{S}$ is a set of $k$ indexes from
        $\left\{1,\cdots n\right\}$. The principle of the proposed greedy algorithm is as follows. First, we start by choosing an initial candidate set $\mathcal{S}$ obtained by randomly selecting a pattern (set of measurements indexes) of size $k$. Then, select from the set of the remaining indexes ($\overline{\mathcal{S}}=\left\{1,\cdots,n\right\}\backslash\mathcal{S}$), the first value that, when replaced with one of the indexes in $\mathcal{S}$ leads to a reduction in
$f\left(\mathcal{H},\mathcal{S}\right)$. When this occurs,  $\mathcal{S}$ is updated by replacing the index that presents the largest reduction in $f\left(\mathcal{H},\mathcal{S}\right)$. 
This procedure is repeated for a predetermined number of iterations $K$. The corresponding algorithm is detailed in Algorithm 1. It can be applied for any metric 
$f\left(\mathcal{H},\mathcal{S}\right)$. This implies  that the greedy algorithm might be considered as a channel aware algorithm when $\mathcal{H}$ is given by ${\bf H}$ and entirely blind when ${\mathcal{H}}$ contains only statistical information about the channel.

\begin{proposition} 
\label{convrgence_greedy}
The greedy algorithm described by the steps of Algorithm 1 is guaranteed to converge.
\end{proposition}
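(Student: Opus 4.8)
The plan is to exhibit a scalar potential — the objective value itself — that is non-increasing along the iterations and that can take only finitely many values, and then conclude that the iterates must stabilize at a fixed point after a finite number of steps; this is exactly the sense in which Algorithm 1 converges.

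First I would fix notation: let $\mathcal{S}^{(t)}$ be the candidate set at the start of iteration $t$ and set $a_t=f(\mathcal{H},\mathcal{S}^{(t)})$. By the update rule, at iteration $t$ a swap between an index of $\overline{\mathcal{S}^{(t)}}$ and an index of $\mathcal{S}^{(t)}$ is performed only if it \emph{strictly} decreases $f$, and among all such swaps the one achieving the largest decrease is selected. Hence exactly one of two things happens: either no improving swap exists, in which case $\mathcal{S}^{(t+1)}=\mathcal{S}^{(t)}$ and $a_{t+1}=a_t$; or $a_{t+1}<a_t$. In both cases $a_{t+1}\le a_t$, so $(a_t)_{t\ge 0}$ is a non-increasing real sequence. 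In particular the algorithm can never revisit a previously visited set, since that would require $a_t$ to increase back — so there are no cycles.

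Next I would invoke finiteness: every $\mathcal{S}^{(t)}$ is a size-$k$ subset of $\{1,\dots,n\}$, so the iterates live in a state space of cardinality $\binom{n}{k}$, and therefore $\{a_t : t\ge 0\}$ contains at most $\binom{n}{k}$ distinct values. A non-increasing sequence taking values in a finite set is eventually constant, say $a_t=a_\star$ for all $t\ge t_\star$. The step taken at iteration $t_\star$ cannot be a strictly improving swap (that would give $a_{t_\star+1}<a_{t_\star}$), hence it is the ``do nothing'' branch, so $\mathcal{S}^{(t_\star+1)}=\mathcal{S}^{(t_\star)}$; since the update is deterministic given the current set, $\mathcal{S}^{(t)}=\mathcal{S}^{(t_\star)}$ for all $t\ge t_\star$. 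Equivalently, $\mathcal{S}^{(t_\star)}$ is a fixed point of the single-swap operator — a local minimum of $f$ with respect to single-index exchanges — and it is reached in at most $\binom{n}{k}$ iterations, hence within the iteration budget $K$ once $K$ is taken large enough. This establishes convergence.

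I do not expect a genuine obstacle here; the argument is a standard monotone-potential-in-a-finite-state-space argument. The only two points that deserve a word of care are (i) that the swap rule is triggered by a \emph{strict} decrease of $f$, which is precisely what forbids cycling and makes the ``$a_{t+1}<a_t$'' alternative meaningful, and (ii) that $f(\mathcal{H},\mathcal{S})$ is finite on every visited set, so that comparisons of objective values make sense — for the $\mathrm{LCE}$ this requires $\mathbf{H}^{H}\mathrm{diag}(\mathbf{s})\mathbf{H}$ to be invertible, which holds almost surely whenever $k\ge m$, a condition guaranteed for large $n$ by Assumption \ref{ass:regime} (since $\liminf k/m>1$).
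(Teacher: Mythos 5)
Your proof is correct, and its core is the same as the paper's: the objective $f(\mathcal{H},\mathcal{S})$ is non-increasing along the iterations because a swap is only accepted when it strictly reduces the metric. Where you diverge is in how convergence is concluded. The paper simply observes that the resulting sequence of objective values is decreasing and bounded below (by the exhaustive-search optimum), and invokes the monotone-convergence property of real sequences; this yields convergence of the objective values but says nothing about termination of the iterates themselves. You instead exploit the finiteness of the state space: since the iterates are size-$k$ subsets of $\{1,\dots,n\}$, the objective takes at most $\binom{n}{k}$ distinct values, strict decrease forbids revisiting a set, and therefore the algorithm stabilizes in finitely many iterations at a fixed point of the single-swap operator, i.e.\ a local minimum with respect to single-index exchanges. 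Your route is slightly longer but buys a strictly stronger conclusion (finite-time termination, absence of cycles, and a characterization of the limit point), and your remark that $f$ must be finite on visited sets — guaranteed almost surely since $k\ge m$ under Assumption \ref{ass:regime} — addresses a point the paper leaves implicit. Both arguments are valid; the paper's is just the minimal bounded-monotone-sequence observation.
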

\begin{proof}
By construction of Algorithm 1, the computation metric $f\left(\mathcal{H},\mathcal{S}\right)$ decreases on each iteration. Since the performance is bounded by the optimal performance achieved through the 
exhaustive search, the algorithm produces a decreasing bounded sequence, which implying its convergence.   
\end{proof}
    \end{enumerate}
\subsection{Complexity Analysis}
In this part, we discuss the complexity of the different selection algorithms. Consider first the case of full blind methods. The convex approach requires $\mathcal{O}(n^3)$ operations which is the cost of using interior-point methods. As for the greedy approach, the computational complexity is governed by two factors.   i) the complexity needed at every iteration and ii) the total number of iterations until convergence, which  we denote by $K$. At every iteration, we need to perform  $k \left(n-k\right)$ computations,  thus in total, we
need $K \times k \left(n-k\right) $ computations. As $k$ and $n$ are assumed to be commensurable, the computation complexity is thus $K \times \mathcal{O}\left(n^2\right)$.
Now, if the greedy approach and convex relaxation based techniques are applied when the channel is perfectly known, complexity has to be multiplied by $N$ which represents the number of times over which the channel changes. To sum up,  we present the complexity achieved by the proposed selection algorithms in Table \ref{table:complexity} in both full blind and CSI aware scenarios. 
Figure \ref{fig:iterations_conv} represents the MSE performance achieved by the greedy algorithm for both cases (channel-aware and blind) as a function of the number of iterations. For both cases, the greedy algorithm requires a number of iterations, $K = 2$ to converge. This value of $K$ will be implemented in all the next simulations for the greedy algorithm.
\begin{table}
\begin{center}
	\begin{tabular}{|c||c|}
		\hline
		\rowcolor{lightgray}
		Algorithm & Complexity  \\\hline
		Convex Optimization(Channel-aware) & $N \times \mathcal{O}\left(n^3\right)$ \\\hline
		Convex Optimization(Blind) & $\mathcal{O}\left(n^3\right)$ \\\hline
		Greedy(Channel-aware)& $K \times N\times \mathcal{O}\left(n^2\right)$ \\\hline
		Greedy(Blind)  & $K \times \mathcal{O}\left(n^2\right)$ \\
		\hline
	\end{tabular}
\end{center}	
\caption{Computational complexity of the different proposed algorithms.}
\label{table:complexity}
\end{table}
\begin{figure}[h!]
	\centering
	\includegraphics[scale=0.65]{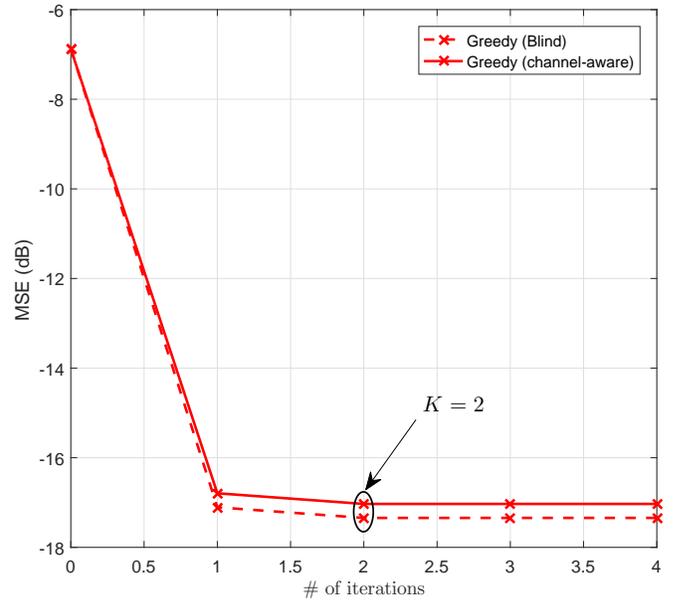}
	\caption{Average MSE performance (over 100 realizations) of the greedy approach for both
		cases (channel-aware and blind) as function of the number of
		iterations. $\mathbf{R}$ follows the model in (\ref{spatial_corr})  ($m=30$, $n = 100$, $k = 50$ and $d = 2$).}
	\label{fig:iterations_conv}
\end{figure} 
\section{Potential Applications} \label{application}
\label{application_section}
In this section, we show two potential applications in which the proposed blind measurement selection algorithms can be applied to help reducing the computational cost.  
The first application concerns antenna selection in massive MIMO systems while the second focuses on the problem of sensor selection in WSN. In the following, we provide a detailed description of the system model in each application and analyze the performance in terms of the error measures proposed in section \ref{formulation}.
\subsection{Antenna Selection for Single-cell Uplink Massive MIMO Systems}
Consider the uplink of a single cell MU-MIMO system in which $m$ single-antenna users are served by a single base station (BS) equipped with $n$ antennas with $m < n$, as sketched in Figure \ref{fig:model}. Assuming that the users' signals are perfectly synchronized in time and frequency, the received vector at the BS is given by
\begin{equation}
\label{received}
\mathbf{y}=\sqrt{\rho} \mathbf{H}\mathbf{x}+\mathbf{e},
\end{equation}
where $\mathbf{y}\in \mathbb{C}^{n \times 1}$ is the received vector at the BS, $\rho$ is the average transmit power per user and $\mathbf{x} \in \mathbb{C}^{m \times 1}$ is the data vector.  Matrix $\mathbf{H}=\left \{ h_{i,j} \right \} \in \mathbb{C}^{n \times m}$ denotes the narrow-band uplink channel matrix where $h_{i,j}$ is the channel coefficient between the $j$-th user and the $i$-th BS's antenna. Moreover, we assume that the random channel $\mathbf{H}$ exhibits the one-sided Kronecker model given by
\begin{equation}
\label{kronecker}
\mathbf{H}=\mathbf{R}^{\frac{1}{2}} \mathbf{W},
\end{equation}
where $\mathbf{W} \in \mathbb{C}^{n\times m}$ is a matrix with $i.i.d$ circularly symmetric zero mean unit-variance complex Gaussian entries, $\mathbf{R}$ models the spatial receive correlation matrix, whose elements represent  the correlation between the antennas of the BS and ${\bf e}$ denotes noise vector at the BS with $i.i.d$ circularly symmetric zero mean unit-variance complex Gaussian entries, i.e., $\mathbf{e} \sim \mathcal{CN}\left(\mathbf{0},\mathbf{I}_n\right)$. 
At the receiver side, the BS  estimates the transmitted vector $\mathbf{x}$ using ${\bf y}$. Several detection procedures can be used, among which are the optimal maximum likelihood (ML) detector and the least squares. The latter  achieves a good balance between complexity and performance. In communication parlance, it is referred to as zero-forcing (ZF) detection and is given by
\begin{align*}
\widehat{\mathbf{x}} = \frac{1}{\sqrt{\rho}}\mathbf{H}^{\dagger} \mathbf{y},
\end{align*}
where $\mathbf{H}^{\dagger}=\left(\mathbf{H}^H\mathbf{H}\right)^{-1}\mathbf{H}^H$ is the pseudo-inverse of $\mathbf{H}$.
Even  with the use of a ZF detector in place of the optimal ML decoder, the complexity of the decoding might be prohibitively high as a result of the high number of antennas $n$. Antenna selection appears thus as a valuable technique that can allow decoding with a lower complexity. In this respect, we evaluate the performance of the aforementioned antenna selection procedures for this practical scenario. 
\begin{figure}[h!]
	\centering
	\includegraphics[width=3.5in]{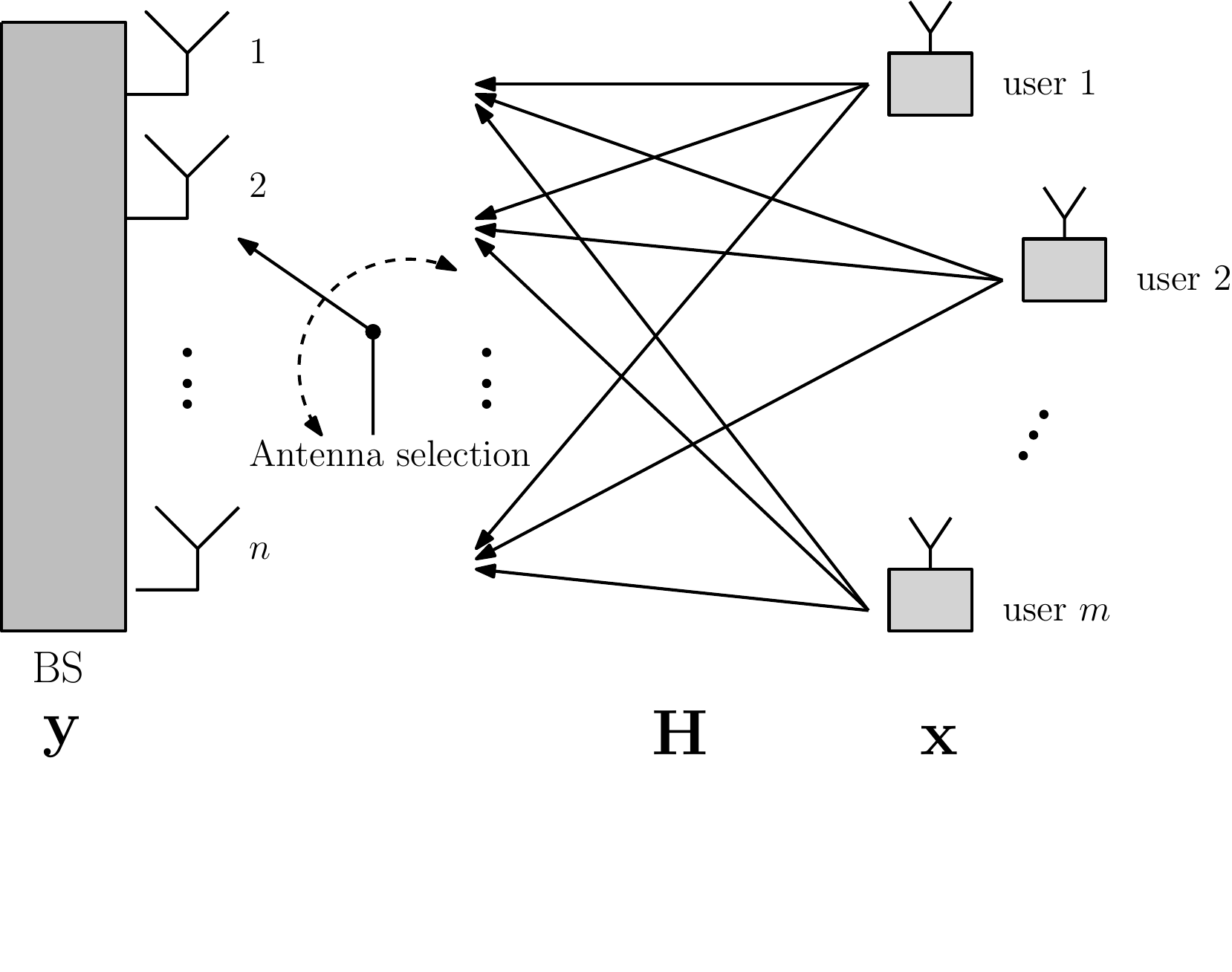}
	\caption{System model of an uplink MU-MIMO system composed of a BS equipped with $n$ antennas and serving $m$ single-antenna users.}
	\label{fig:model}
\end{figure}
\begin{figure}[h!]
	\centering
	\includegraphics[scale=0.4]{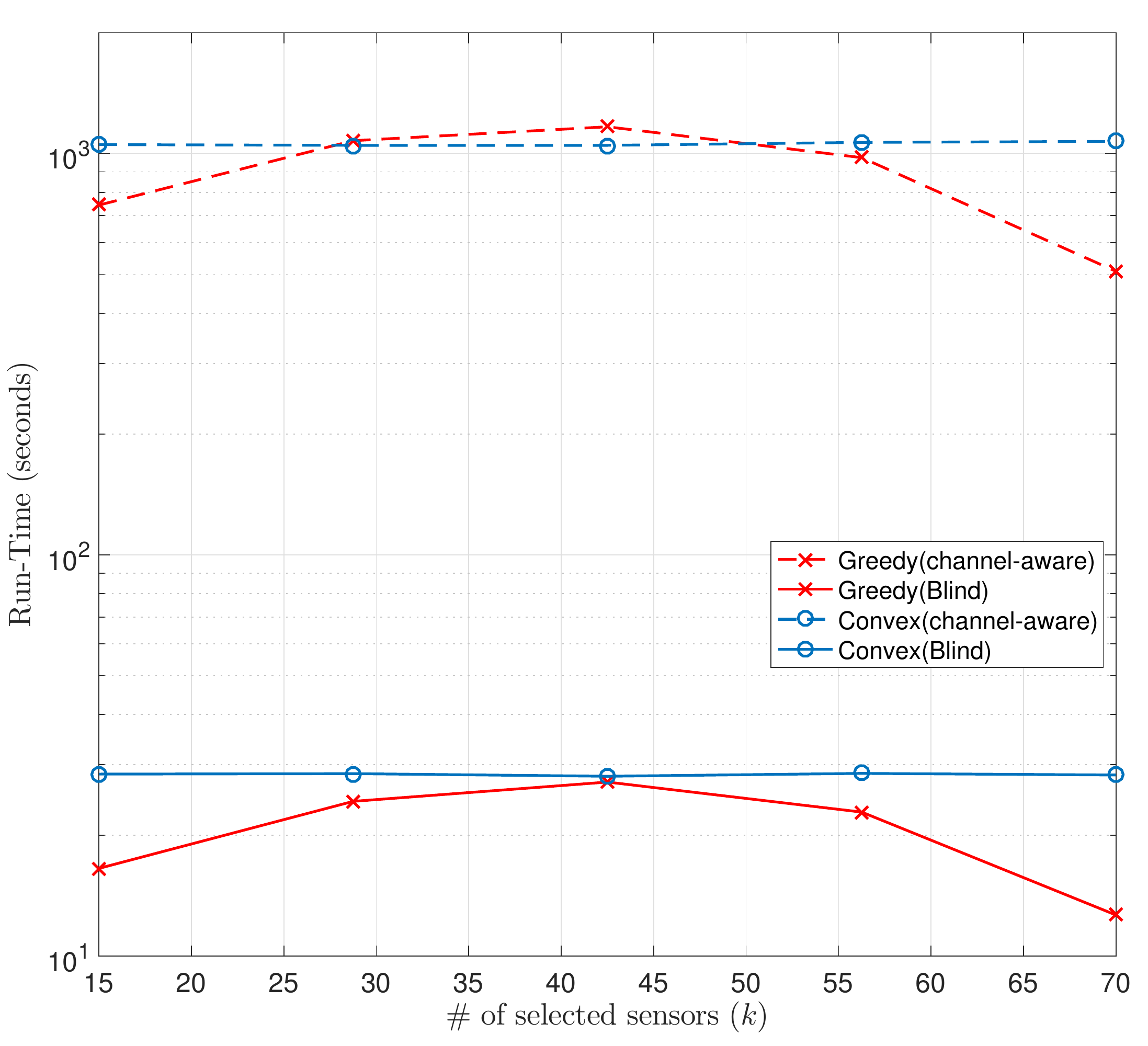}
	\caption{Run-time in seconds vs the number of selected
		antennas $k$ for the considered selection algorithms.}
	\label{fig:Time_MIMO}
\end{figure}
\subsubsection*{Numerical Example} 
In all experiments, we assume that the number of users $m$ is  $30$ and the total budget of antennas is $n=100$. We also set the ${\rm SNR}$ to $\rho=20$ dB. The following spatial correlation model \cite{giusi} is considered:  
\begin{equation}
\label{spatial_corr}
\mathbf{R}_{i,j} = \exp\left(-0.05. d^2 \left(i-j\right)^2\right), \: 1 \leq i,j \leq n.
\end{equation}
that models a broadside Gaussian power azimuth spectrum with a  root-mean-square spread of $2^{\circ}$ where $d$ corresponds to the  antenna separation in wavelength units. The greedy and the convex relaxation based algorithms are considered in channel-aware ($\mathcal{H}=\mathbf{H}$) and fully blind scenarios ($\mathcal{H}=\mathbf{R}$).
Figure \ref{fig:MSE_MIMO} reports the achieved averaged MSE (over $100$ realizations) for all proposed selection algorithms along with the random selection algorithm that randomly picks a set of $k$ antennas out of $n$. As a major observation, we note that when the correlation between antennas is low ($d=4$), the proposed blind algorithms are not that advantageous as compared to the random selection algorithm. This is kind of expected since the rows of ${\bf H}$ become almost statistically independent and
identically distributed. They are thus statistically equivalent and selecting any $k$ rows would asymptotically achieve the same MSE, as can be evidenced from the deterministic equivalent of the MSE shown in Lemma 1. However, with  the impact of correlation becoming more important $(d\downarrow)$, the gain of blind approaches over the random selection approach increases. They constitute thus a valuable option, given the fact that they only entail a loss of a up to $1$ dB  as
compared to channel-aware algorithms. This can be clearly seen in Figure \ref{fig:MSE_d}, where we plot the average MSE against the antennas' separation $d$. It is worth mentioning that for $d=1$ and $d=2$, the proposed blind greedy approach outperforms the channel-aware convex approach. This may sounds counter intuitive, but this is in fact due to the quantization effect at the output of the relaxed optimization problem.
Also, it is worth mentioning that the blind greedy algorithms perform antenna selection at the pace of the variation of the large scale statistics. This must be compared with the channel aware algorithms which are required to perform antenna selection for every channel realization. A high reduction in the computational complexity is thus achieved as evidenced by  Figure \ref{fig:Time_MIMO}, showing the run-time in seconds consumed by the different selection
algorithms. 
All in all, it appears that the proposed blind selection techniques present in reality a better trade-off between complexity and performance.
\begin{figure*}[t!]
	\centering
	\includegraphics[scale=0.6]{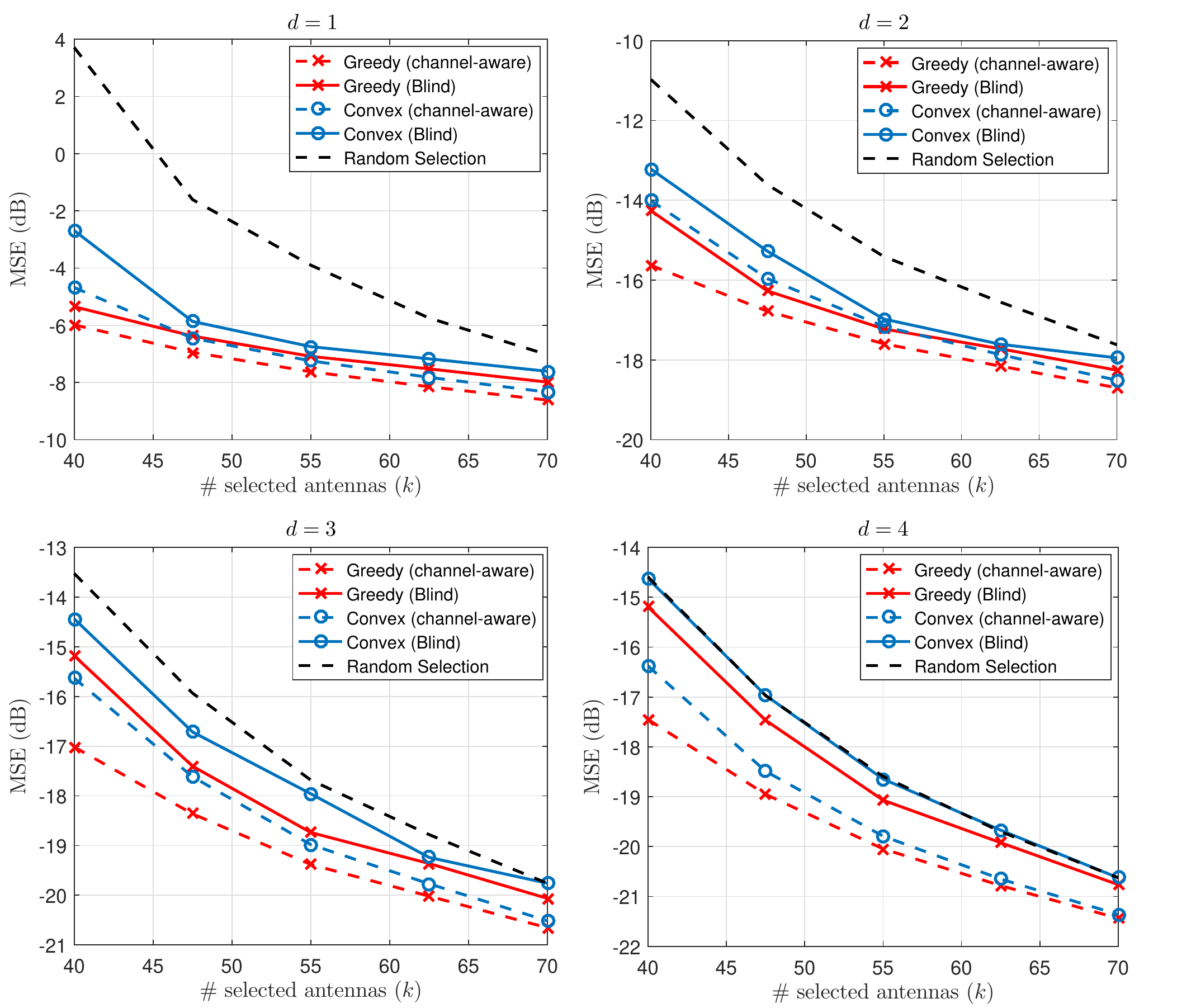}
	\caption{Average MSE achieved by the proposed selection techniques versus $k$ for different values of the antennas' separation $d$.}
	\label{fig:MSE_MIMO}
\end{figure*}
\begin{figure}[t!]
	\centering
	\includegraphics[scale=0.66]{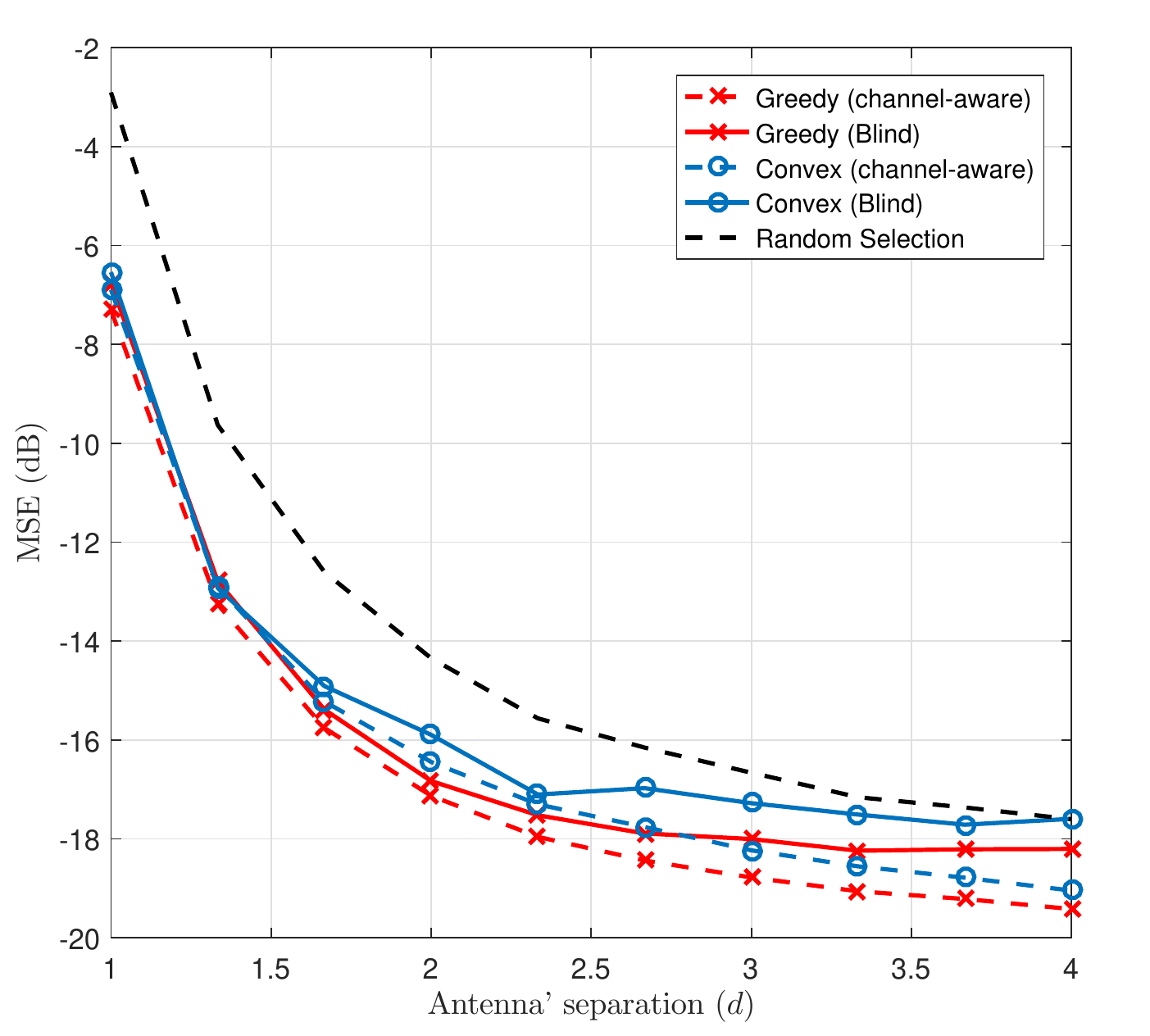}
	\caption{Average MSE achieved by the proposed selection techniques versus the antennas' separation $d$ with $k=50$.}
	\label{fig:MSE_d}
\end{figure}
\subsection{Sensor Selection in WSN}
We consider a wireless sensor network (WSN) with total number of $n=100$ sensor nodes sensing a phenomena of dimension $m=30$ where sensors are randomly deployed over a circular area of radius $30$m. Thus, we have the same linear system as in (\ref{linear1}). Unlike the previous application, we assume that the rows of $\mathbf{H}$ are statistically independent and consider the correlation in the measurement noise. The error covariance matrix in this case becomes
\begin{equation}
\label{sigma_wsn}
\mathbf{\Sigma} = \left(\mathbf{W}^H \mathbf{\Phi}^{-1}\mathbf{W}\right)^{-1},
\end{equation} 
where $\mathbf{\Phi}$ is the noise covariance matrix given by \cite{noise_cov}
\begin{equation}
\label{corr_sensors}
\mathbf{\Phi}_{i,j} = \sigma^2 \exp\left(-\rho \left \| S_i-S_j \right \|_2\right), \: 1 \leq i,j \leq n,
\end{equation}
where $\left \| S_i-S_j \right \|_2$ denotes the Euclidean distance between nodes' locations in the 2D plane $S_i$ and $S_j$, $\sigma^2=1$ and $\rho$ is the correlation parameter that controls the strength of the spatial correlation. Obviously, a larger $\rho$ results in a weaker correlation and vice-versa. As shown in (\ref{sigma_wsn}), $\mathbf{\Sigma}$ is function of $\mathbf{\Phi}^{-1}$ which presents a slight difference as compared to the previous application, where we use $\mathbf{\Phi}^{-1}$ instead of $\mathbf{R}$ \footnote{All the derived results concerning the asymptotic equivalents and their convergence can be applied in a straightforward manner to the case $\mathbf{R} = \mathbf{\Phi}^{-1}$.}.

\par 
We examine the performance in terms of the MSE as well as the LCE and the WEV. 
As shown in Figure \ref{fig:sensors_perform1}, similar observations to the application of massive MIMO can be conducted. We notice that increasing the correlation ($\rho \downarrow$) results in a better performance of the proposed blind selection algorithms as compared to the random selection algorithm and vice versa. We also observe that the proposed blind convex approach fails to achieve a good performance for both the MSE and the WEV and yields a performance that is worse than that of random selection. This is due to the same quantization effect as explained in the case of massive MIMO.  In Figures \ref{fig:LCE_perform1} and \ref{fig:WEV_perform1}, we show the performance of the $\text{LCE}$ and the $\text{WEV}$ against the correlation parameter $\rho$ respectively. Figures \ref{fig:LCE_perform1} and \ref{fig:WEV_perform1} clearly show that the performance for both the $\text{LCE}$ and the $\text{WEV}$ improves with increasing the correlation ($\rho$ $\downarrow$ ) for the proposed blind approach. This is suitable in such application, since a central node can perform sensor selection without knowledge of the channel matrix which may require a huge overhead.



\begin{figure*}[t!]
	\centering
\input{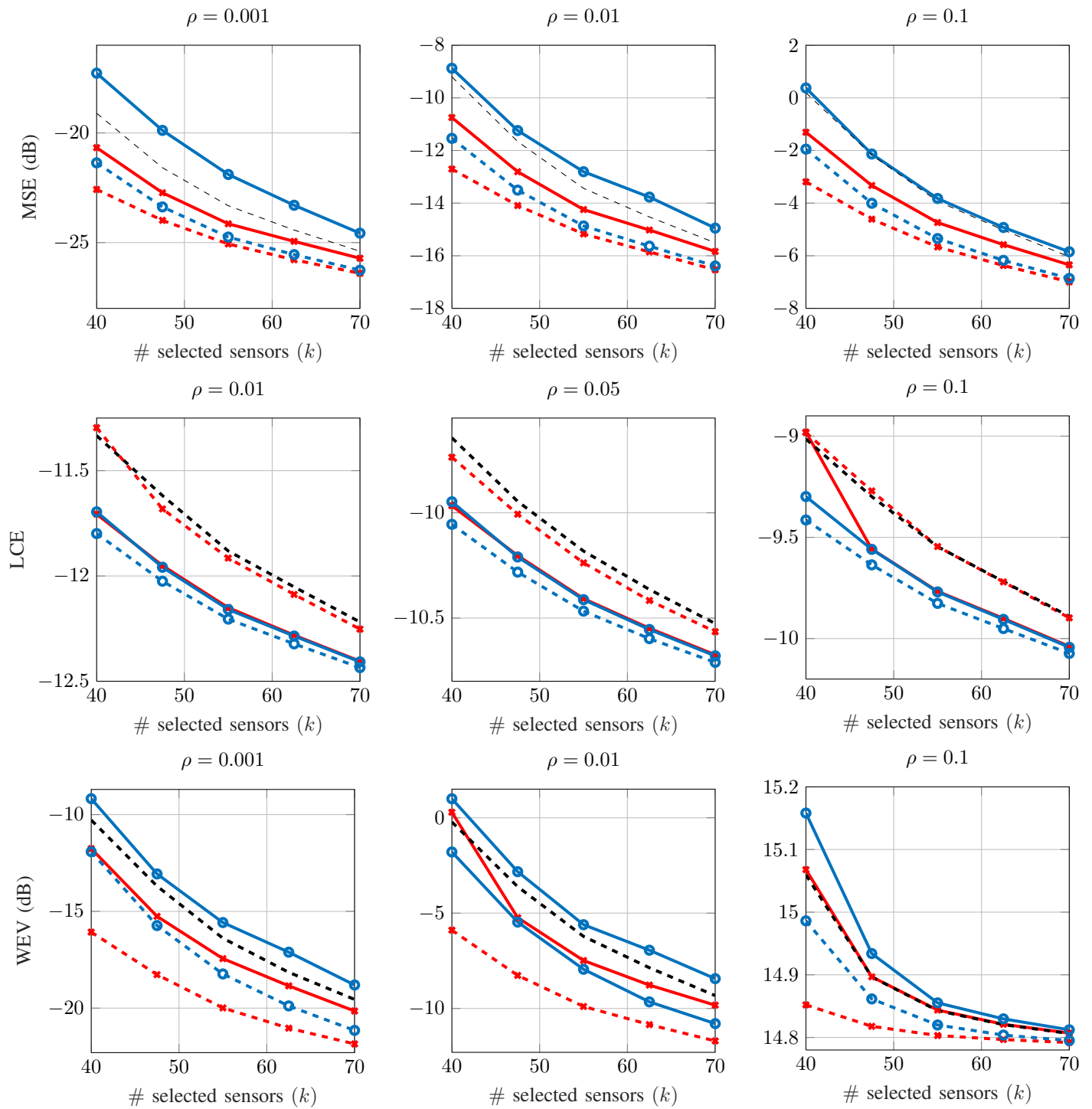}
\caption{Average error $\text{MSE}$, $\text{LCE}$ and $\text{WEV}$) achieved by the proposed selection techniques for different values of the correlation parameter $\rho$ in the WSN scenario.}
	\label{fig:sensors_perform1}
\end{figure*}
\begin{figure}[t!]
	\centering
%
%
\definecolor{mycolor1}{rgb}{0.00000,0.44706,0.74118}%
\begin{tikzpicture}

\begin{axis}[%
width=3in,
height=2.8in,
at={(0in,0.771in)},
scale only axis,
xmode=log,
xmin=0.001,
xmax=0.1,
xminorticks=true,
xlabel style={font=\color{white!15!black}},
xlabel={Corrleation parameter $\rho$},
ymin=-15,
ymax=-9,
ylabel style={font=\color{white!15!black}},
ylabel={LCE},
axis background/.style={fill=white},
xmajorgrids,
xminorgrids,
ymajorgrids,
legend style={at={(0.31,0)}, anchor=south west, legend cell align=left, align=left, draw=white!15!black}
]
\addplot [color=red, dashed, line width=1.5pt, mark=x, mark options={solid, red}]
  table[row sep=crcr]{%
0.001	-13.8660\\
0.012	-11.3784\\
0.023	-10.7257\\
0.034	-10.3472\\
0.045	-10.0662\\
0.056	-9.8520\\
0.067	-9.6830\\
0.078	-9.5377\\
0.089	-9.4096\\
0.1	    -9.3130\\
};
\addlegendentry{Greedy (channel-aware)}

\addplot [color=red, line width=1.5pt, mark=x, mark options={solid, fill=red, red}]
  table[row sep=crcr]{%
0.001	-14.0465\\
0.012	-11.5646\\
0.023	-10.9078\\
0.034	-10.5188\\
0.045	-10.2488\\
0.056	-10.0303\\
0.067	-9.8552\\
0.078	-9.7067\\
0.089	-9.5750\\
0.1	-9.4699\\
};
\addlegendentry{Greedy (Blind)}

\addplot [color=mycolor1, dashed, line width=1.5pt, mark=o, mark options={solid, mycolor1}]
  table[row sep=crcr]{%
0.0010  -14.1206\\
0.0120  -11.6305\\
0.0230  -10.9797\\
0.0340  -10.5905\\
0.0450  -10.3176\\
0.0560  -10.1009\\
0.0670  -9.9270\\
0.0780  -9.7759\\
0.0890  -9.6498\\
0.1000  -9.5417\\
};
\addlegendentry{Convex (channel-aware)}

\addplot [color=mycolor1, line width=1.5pt, mark=o, mark options={solid, mycolor1}]
  table[row sep=crcr]{%
0.0010	-14.0526\\
0.0120  -11.5674\\
0.0230  -10.9120\\
0.0340  -10.5208\\
0.0450  -10.2492\\
0.0560  -10.0358\\
0.0670  -9.8571\\
0.0780  -9.7066\\
0.0890  -9.5801\\
0.1000  -9.4709\\
};
\addlegendentry{Convex (Blind)}

\addplot [color=black, dashed, line width=1.5pt]
  table[row sep=crcr]{%
0.0010  -13.8085\\
0.0120  -11.3303\\
0.0230  -10.6874\\
0.0340  -10.2851\\
0.0450	-10.0148\\
0.0560  -9.8050\\
0.0670  -9.6394\\
0.0780  -9.4885\\
0.0890  -9.3584\\
0.1000  -9.2588\\
};
\addlegendentry{Random Selection}

\end{axis}
\end{tikzpicture}%
	\caption{Average LCE achieved by the proposed selection techniques versus the correlation parameter $\rho$ in the case of WSN with $k=50$.}
	\label{fig:LCE_perform1}
\end{figure}
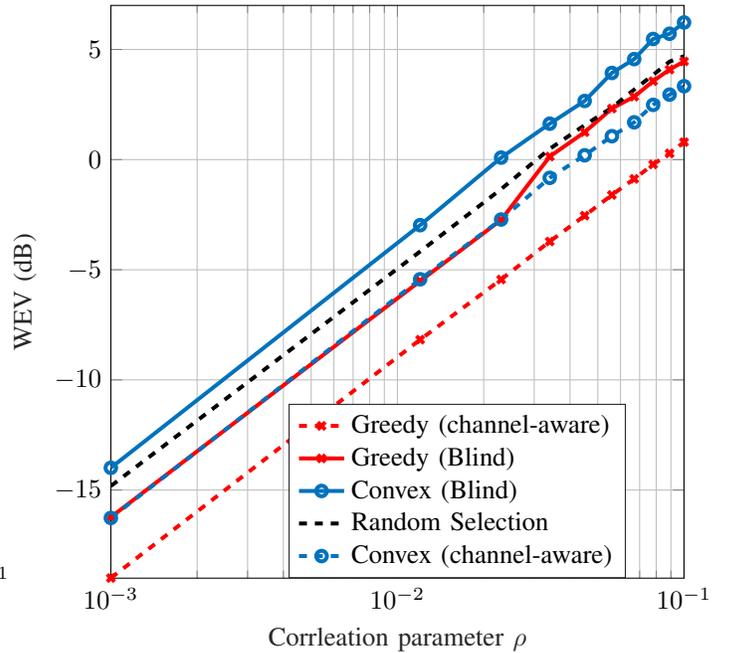
\begin{figure}[t!]
	\label{blueperformance}
	\centering
%
%
\definecolor{mycolor1}{rgb}{0.00000,0.44706,0.74118}%
\begin{tikzpicture}

\begin{axis}[%
width=3in,
height=3in,
at={(0.696in,0.903in)},
scale only axis,
xmode=log,
xmin=0.001,
xmax=0.1,
xminorticks=true,
xlabel style={font=\color{white!15!black}},
xlabel={Corrleation parameter $\rho$},
ymin=-19,
ymax=7,
ylabel style={font=\color{white!15!black}},
ylabel={WEV (dB)},
axis background/.style={fill=white},
xmajorgrids,
xminorgrids,
ymajorgrids,
legend style={at={(0.31,0)}, anchor=south west, legend cell align=left, align=left, draw=white!15!black}
]
\addplot [color=red, dashed, line width=1.5pt, mark=x, mark options={solid, red}]
  table[row sep=crcr]{%
 0.0010  -18.9939\\
 0.0120   -8.1732\\
 0.0230   -5.4435\\
 0.0340   -3.7171\\
 0.0450   -2.5450\\
 0.0560   -1.6048\\
 0.0670   -0.8748\\
 0.0780   -0.2192\\
 0.0890    0.2825\\
 0.1000    0.7952\\
};
\addlegendentry{Greedy (channel-aware)}

\addplot [color=red, line width=1.5pt, mark=x, mark options={solid, red}]
  table[row sep=crcr]{%
    0.0010  -16.2318\\
    0.0120   -5.5108\\
    0.0230   -2.7379\\
    0.0340    0.1390\\
    0.0450    1.2448\\
    0.0560    2.3180\\
    0.0670    2.8584\\
    0.0780    3.5604\\
    0.0890    4.0881\\
    0.1000    4.4564\\
};
\addlegendentry{Greedy (Blind)}

\addplot [color=mycolor1, line width=1.5pt, mark=o, mark options={solid, mycolor1}]
  table[row sep=crcr]{%
    0.0010  -13.9916\\
    0.0120   -2.9757\\
    0.0230    0.0900\\
    0.0340    1.6311\\
    0.0450    2.6626\\
    0.0560    3.9345\\
    0.0670    4.5630\\
    0.0780    5.4768\\
    0.0890    5.7130\\
    0.1000    6.2248\\
};
\addlegendentry{Convex (Blind)}

\addplot [color=black, dashed, line width=1.5pt]
  table[row sep=crcr]{%
    0.0010  -14.8169\\
    0.0120   -4.1693\\
    0.0230   -1.3393\\
    0.0340    0.4889\\
    0.0450    1.5639\\
    0.0560    2.3591\\
    0.0670    3.1792\\
    0.0780    3.8441\\
    0.0890    4.4492\\
    0.1000    4.6942\\
};
\addlegendentry{Random Selection}

\addplot [color=mycolor1, dashed, line width=1.5pt, mark=o, mark options={solid, mycolor1}]
  table[row sep=crcr]{%
    0.0010  -16.2675\\
    0.0120   -5.4314\\
    0.0230   -2.7191\\
    0.0340   -0.8277\\
    0.0450    0.1955\\
    0.0560    1.0635\\
    0.0670    1.6929\\
    0.0780    2.4848\\
    0.0890    2.9500\\
    0.1000    3.3332\\
};
\addlegendentry{Convex (channel-aware)}

\end{axis}
\end{tikzpicture}%
	\caption{Average WEV in dB achieved by the proposed selection techniques versus the correlation parameter $\rho$ in the case of WSN with $k=50$.}
	\label{fig:WEV_perform1}
\end{figure}
\section{Conclusion} \label{conclusion}
In this paper, we introduced blind techniques for measurement selection. In particular, we showed that using tools from random matrix theory, it is possible to asymptotically approximate error measures that are commonly used in this context. As such, perfect knowledge of the measurement matrix is not needed and only statistics are required to perform measurement selection. We proposed two techniques: the first is based on a greedy approach and the second is based on a convex relaxation heuristic. The proposed blind selection techniques have been tested in two applications related to wireless communications: the first is antenna selection in uplink multiusers massive MIMO systems and the second is sensor selection in wireless sensor networks. Numerical results showed that the blind techniques have a comparable performance to techniques that require full knowledge of the measurement matrix, especially at high correlation.
\bibliographystyle{IEEEtran}
\bibliography{References}

\begin{thebibliography}{10}
\providecommand{\url}[1]{#1}
\csname url@samestyle\endcsname
\providecommand{\newblock}{\relax}
\providecommand{\bibinfo}[2]{#2}
\providecommand{\BIBentrySTDinterwordspacing}{\spaceskip=0pt\relax}
\providecommand{\BIBentryALTinterwordstretchfactor}{4}
\providecommand{\BIBentryALTinterwordspacing}{\spaceskip=\fontdimen2\font plus
\BIBentryALTinterwordstretchfactor\fontdimen3\font minus
  \fontdimen4\font\relax}
\providecommand{\BIBforeignlanguage}[2]{{%
\expandafter\ifx\csname l@#1\endcsname\relax
\typeout{** WARNING: IEEEtran.bst: No hyphenation pattern has been}%
\typeout{** loaded for the language `#1'. Using the pattern for}%
\typeout{** the default language instead.}%
\else
\language=\csname l@#1\endcsname
\fi
#2}}
\providecommand{\BIBdecl}{\relax}
\BIBdecl

\bibitem{boyd}
S.~Joshi and S.~Boyd, ``{S}ensor {S}election via {C}onvex {O}ptimization,''
  \emph{IEEE Trans. on Signal Processing}, vol.~57, no.~2, pp. 451--462, Feb
  2009.

\bibitem{robotics_ref}
G.~Hovland and B.~McCarragher, ``{D}ynamic {S}ensor {S}election for {R}obotic
  {S}ystems,'' in \emph{IEEE Int. Conf. Robotics Automation}, vol.~1, 1997, pp.
  272--277.

\bibitem{kammer}
D.~Kammer, ``{S}ensor {P}lacement for {O}n-{O}rbit {M}odal {I}dentification and
  {C}orrelation of {L}arge {S}pace {S}tructures,'' \emph{J. Guid., Control,
  Dynam}, vol.~14, pp. 251--259, 1991.

\bibitem{genetic}
L.~Yao, W.~Sethares, and D.~Kammer, ``{S}ensor {P}lacement for {O}n-{O}rbit
  {M}odal {I}dentification via a {G}enetic {A}lgorithm,'' \emph{Amer. Inst.
  Aeronaut. Astronaut. J.}, vol.~31, p. 1922–1928, 1993.

\bibitem{nguyen}
N.~Nguyen and A.~Miller, ``{A} {R}eview of {S}ome {E}xchange {A}lgorithms for
  {C}onstructing {D}iscrete {D}-{O}ptimal {D}esigns,'' \emph{Comput. Statist.
  Data Anal.}, vol.~14, pp. 489–--498, 1992.

\bibitem{john}
R.~John and N.~Draper, ``{D}-{O}ptimality for {R}egression {D}esigns: {A}
  {r}eview,'' \emph{Technometrics}, vol.~17, no.~1, pp. 15--–23, 1997.

\bibitem{larsson}
E.~{L}arsson, O.~{E}dfors, F.~{T}ufvesson, and T.~{M}arzetta, ``{M}assive
  {MIMO} for {N}ext {G}eneration {W}ireless {S}ystems,'' \emph{IEEE
  Communications Magazine}, vol.~52, no.~2, pp. 186--195, February 2014.

\bibitem{marzetta}
T.~L. {M}arzetta, ``{N}oncooperative {C}ellular {W}ireless with {U}nlimited
  {N}umbers of {B}ase {S}tation {A}ntennas,'' \emph{IEEE Transactions on
  Wireless Communications}, vol.~9, no.~11, pp. 3590--3600, November 2010.

\bibitem{rusek}
F.~{R}usek, D.~{P}ersson, B.~K. {L}au, E.~G. {L}arsson, T.~L. {M}arzetta,
  O.~{E}dfors, and F.~{T}ufvesson, ``{S}caling {U}p {MIMO}: {O}pportunities and
  {C}hallenges with {V}ery {L}arge {A}rrays,'' \emph{IEEE Signal Processing
  Magazine}, vol.~30, no.~1, pp. 40--60, Jan 2013.

\bibitem{gao}
X.~Gao, O.~Edfors, J.~Liu, and F.~Tufvesson, ``{A}ntenna {S}election in
  {M}easured {M}assive {MIMO} {C}hannels using {C}onvex {O}ptimization,'' in
  \emph{Globecom Workshops (GC Wkshps), 2013 IEEE}, Dec 2013, pp. 129--134.

\bibitem{molisch}
A.~F. Molisch and M.~Z. Win, ``{MIMO} {S}ystems with {A}ntenna {S}election,''
  \emph{IEEE Microwave Magazine}, vol.~5, no.~1, pp. 46--56, Mar 2004.

\bibitem{sanayei}
S.~Sanayei and A.~Nosratinia, ``{A}ntenna {S}election in {MIMO} {S}ystems,''
  \emph{IEEE Communications Magazine}, vol.~42, no.~10, pp. 68--73, Oct 2004.

\bibitem{yao}
L.~Yao, W.~A. Sethares, and D.~C. Kammer, ``{Sensor Placement for On-orbit
  Modal Identification via a Genetic Algorithm},'' \emph{Amer. Inst. Aeronaut.
  Astronaut. J.}, vol.~31, no.~10, pp. 1922--1928, 1993.

\bibitem{sayed}
A.~Sayed, \emph{Fundamentals of Adaptive Filtering}.\hskip 1em plus 0.5em minus
  0.4em\relax John Wiley \& Sons, 2003.

\bibitem{serfling}
R.~J. Serfling, \emph{Approximations Theorems of Mathematical
  Statistics}.\hskip 1em plus 0.5em minus 0.4em\relax John Wiley \& Sons, 2002.

\bibitem{silverstein}
J.~W. {S}ilverstein and Z.~D. {B}ai, ``{O}n the {E}mpirical {D}istribution of
  {E}igenvalues of a {C}lass of {L}arge {D}imensional {R}andom {M}atrices,''
  \emph{Journal of Multivariate Analysis}, vol.~54, pp. 175--192, May 2002.

\bibitem{dumont-07}
\emph{High SNR Approximations of the Capacity of MIMO Correlated Rician
  channels: A large System Approach}, 2007.

\bibitem{BAI99}
Z.~D. Bai and J.~W. Silverstein, ``{Exact Separation of Eigenvalues of Large
  Dimensional Sample Covariance Matrices},'' \emph{The Annals of Probability},
  vol.~27, no.~3, pp. 1536--1555, 1999.

\bibitem{sigdel}
S.~Sigdel, W.~A. Krzymień, and M.~Al-Shalash, ``{G}reedy and {P}rogressive
  {U}ser {S}cheduling for {C}o{MP} {W}ireless {N}etworks,'' in
  \emph{Communications (ICC), 2012 IEEE International Conference on}, June
  2012, pp. 4218--4223.

\bibitem{giusi}
G.~Alfano, A.~M. Tulino, A.~{L}ozano, and S.~{V}erdu, ``{C}apacity of {MIMO}
  {C}hannels with {O}ne-sided {C}orrelation,'' \emph{ISSSTA}, August 2004.

\bibitem{noise_cov}
S.~Liu, S.~P. Chepuri, M.~Fardad, E.~Maşazade, G.~Leus, and P.~K. Varshney,
  ``{S}ensor {S}election for {E}stimation with {C}orrelated {M}easurement
  {N}oise,'' \emph{IEEE Transactions on Signal Processing}, vol.~64, no.~13,
  pp. 3509--3522, July 2016.

\bibitem{dumont}
J.~Dumont, W.~Hachem, S.~Lasaulce, P.~Loubaton, and J.~Najim, ``{O}n the
  {C}apacity {A}chieving {C}ovariance {M}atrix for {R}ician {MIMO} {C}hannels:
  {A}n {A}symptotic {A}pproach,'' \emph{IEEE Transactions on Information
  Theory}, vol.~56, no.~3, pp. 1048--1069, March 2010.

\end{thebibliography}
\section*{Appendix A}
\section*{Gradient Derivation of the different Deterministic Equivalents}
\subsection*{Useful Lemmas}
\subsubsection*{Inversion Lemma}
For $\mathbf{A}$ an invertible square matrix and column vector $\mathbf{u}$ such  that $1+\mathbf{u}^T \mathbf{A}^{-1}\mathbf{u} \neq 0$, we have
\begin{align}
\left(\mathbf{A}+\mathbf{u}\mathbf{u}^T\right)^{-1} & = \mathbf{A}^{-1}-\frac{1}{1+\mathbf{u}^T\mathbf{A}^{-1}\mathbf{u}}\mathbf{A}^{-1}\mathbf{u}\mathbf{u}^T\mathbf{A}^{-1}. \\
\det \left(\mathbf{A}+\mathbf{u}\mathbf{u}^T\right) & = \det \left(\mathbf{A}\right) \left(1+\mathbf{u}^T\mathbf{A}^{-1}\mathbf{u}\right).
\end{align}

\subsubsection*{Implicit Function Theorem}
Let $f: U \times V \rightarrow \mathbb{R}$ be a continuously differentiable function and let $g: U \rightarrow V$ be the implicit function defined as follows
\begin{equation}
\left \{ \left(\mathbf{x},g\left(\mathbf{x}\right)\right) | \mathbf{x} \in U \right \} = \left \{ \left(\mathbf{x},\mathbf{y}\right)\in U \times V | f\left(\mathbf{x,y}\right)=\mathbf{c} \right \}.
\end{equation}
Then, $g$ is continuously differentiable and  
\begin{equation}
\frac{\partial g}{\partial x_i}\left(\mathbf{x}\right) = - \left(\frac{\partial f}{\partial y}\left(\mathbf{x},g\left(\mathbf{x}\right)\right)\right)^{-1} \frac{\partial f}{\partial x_i}\left(\mathbf{x},g\left(\mathbf{x}\right)\right).
\end{equation}
\subsection*{Some Useful Notations}
\begin{itemize}
	\item $\mathbf{R}_{\mathbf{s}} = \mathbf{R}^{\frac{1}{2}} \diag\left(\mathbf{s}\right)\mathbf{R}^{\frac{1}{2}}$.
	\item $\mathbf{r}_i$ is the $i$th column of $\mathbf{R}^{\frac{1}{2}}$.
	\item $\mathbf{\Psi}_i =\mathbf{R}_{\mathbf{s}} - s_i \mathbf{r}_i \mathbf{r}_i^T. $
\end{itemize}
\subsection*{Gradient of the MSE}
For the $\text{MSE}$, we have the following fixed-point equation in terms of $\delta$
\begin{align*}
\delta \tr\left[\mathbf{R}_{\mathbf{s}}\left(\mathbf{I}+\delta \mathbf{R}_{\mathbf{s}}\right)^{-1}\right]-m =0.
\end{align*}
Define
\begin{align*}
f\left(\mathbf{s},y\right) & = y \tr\left[\mathbf{R}_{\mathbf{s}}\left(\mathbf{I}+y \mathbf{R}_{\mathbf{s}}\right)^{-1}\right]-m \\
& = n-m-\tr\left[\left(\mathbf{I}+y \mathbf{R}_{\mathbf{s}}\right)^{-1}\right].
\end{align*}
Then, 
\begin{align*}
\frac{\partial f}{\partial y}\left(\mathbf{s},y\right) = \tr\left[\mathbf{R}_{\mathbf{s}}\left(\mathbf{I}+y \mathbf{R}_{\mathbf{s}}\right)^{-2}\right].
\end{align*}
On the other hand
\begin{align*}
\tr\left[\left(\mathbf{I}+y \mathbf{R}_{\mathbf{s}}\right)^{-1}\right] & = \tr\left(\mathbf{I}+y\mathbf{\Psi}_i+y s_i \mathbf{r}_i\mathbf{r}_i^T\right)^{-1} \\
& = \tr\left(\mathbf{I}+y\mathbf{\Psi}_i\right)^{-1}- \frac{y s_i \mathbf{r}_i^T \left(\mathbf{I}+y\mathbf{\Psi}_i\right)^{-2}\mathbf{r}_i}{1+ys_i\mathbf{r}_i^T\left(\mathbf{I}+y\mathbf{\Psi}_i\right)^{-1} \mathbf{r}_i } 
\end{align*}
Thus,
\begin{align*}
\frac{\partial f}{\partial s_i}\left(\mathbf{s},y\right) & = \frac{y\mathbf{r}_i^T \left(\mathbf{I}+y\mathbf{\Psi}_i\right)^{-2}\mathbf{r}_i}{\left(1+ys_i\mathbf{r}_i^T\left(\mathbf{I}+y\mathbf{\Psi}_i\right)^{-1} \mathbf{r}_i \right)^2} \\
& = y \mathbf{r}_i^T \left(\mathbf{I}+y\mathbf{R}_s\right)^{-2}\mathbf{r}_i \\
& = y \left[\mathbf{R}^{\frac{1}{2}}\left(\mathbf{I}+y\mathbf{R}_s\right)^{-2}\mathbf{R}^{\frac{1}{2}}\right]_{i,i}
\end{align*}
Finally, 
\begin{align*}
\delta'_i = \frac{\partial \delta}{\partial s_i}&  = - \frac{\delta \left[\mathbf{R}^{\frac{1}{2}}\left(\mathbf{I}+\delta\mathbf{R}_s\right)^{-2}\mathbf{R}^{\frac{1}{2}}\right]_{i,i}}{ \tr\left[\mathbf{R}_{\mathbf{s}}\left(\mathbf{I}+\delta \mathbf{R}_{\mathbf{s}}\right)^{-2}\right]}.
\end{align*}
and
$$
 	\frac{\partial \overline{\text{MSE}}\left(\mathbf{s}\right)}{\partial s_i}=-
 	\frac{\delta  \left[\mathbf{R}^{\frac{1}{2}}\left(\mathbf{I}+\delta\mathbf{R}_s\right)^{-2}\mathbf{R}^{\frac{1}{2}}\right]_{i,i}}{ \tr\left[\mathbf{R}_{\mathbf{s}}\left(\mathbf{I}+\delta \mathbf{R}_{\mathbf{s}}\right)^{-2}\right]}. 
$$

\subsection*{Gradient of the LCE}
The LCE converges a.s. to the following quantity
\begin{align*}
\overline{\text{LCE}}\left(\mathbf{s}\right) & = -\frac{1}{m} \log \det \left(\mathbf{I}+\delta\mathbf{R}_{\mathbf{s}}\right)+\log\left(c\delta\right)+1 \\
& = -\frac{1}{m} \log \left|\mathbf{I}+\delta \mathbf{\Psi}_i\right|
\\
&-\frac{1}{m}\log \left(1+\delta s_i\mathbf{r}_i^T\left(\mathbf{I}+\delta \mathbf{\Psi}_i\right)^{-1}\mathbf{r}_i\right) \\
&   + \log \left(c\delta\right) +1.
\end{align*}
We have
\begin{align*}
\frac{\partial }{\partial s_i} \frac{1}{m} \log \left|\mathbf{I}+\delta \mathbf{\Psi}_i\right| = \frac{\delta_i'}{ m}\tr\left[\mathbf{\Psi}_i\left(\mathbf{I}+\delta \mathbf{\Psi}_i\right)^{-1}\right].
\end{align*}
and
\begin{align*}
&\frac{\partial }{\partial s_i} \frac{1}{m}\log \left(1+\delta s_i\mathbf{r}_i^T\left(\mathbf{I}+\delta\mathbf{\Psi}_i\right)^{-1}\mathbf{r}_i\right)  \\
& = 
\frac{\frac{\delta}{m} \mathbf{r}_i^T\left(\mathbf{I}+\delta \mathbf{\Psi}_i\right)^{-1}\mathbf{r}_i+\frac{\delta'_i s_i}{m} \mathbf{r}_i^T\left(\mathbf{I}+\delta\mathbf{\Psi}_i\right)^{-2}\mathbf{r}_i}{1+\delta s_i\mathbf{r}_i^T\left(\mathbf{I}+\delta \mathbf{\Psi}_i\right)^{-1}\mathbf{r}_i}
\\
& = \frac{\delta}{m}	\left[\mathbf{R}^{\frac{1}{2}}\left(\mathbf{I}+\delta\mathbf{R}_s\right)^{-1}\mathbf{R}^{\frac{1}{2}}\right]_{i,i} \\
& + \frac{\delta_i's_i}{m} \frac{\left[\mathbf{R}^{\frac{1}{2}}\left(\mathbf{I}+\delta\mathbf{R}_s\right)^{-2}\mathbf{R}^{\frac{1}{2}}\right]_{i,i}}{1-\delta s_i\left[\mathbf{R}^{\frac{1}{2}}\left(\mathbf{I}+\delta \mathbf{R}_s\right)^{-1}\mathbf{R}^{\frac{1}{2}}\right]_{i,i}}
\end{align*}
Finally,
\begin{align*}
\frac{\partial \overline{\text{LCE}}\left(\mathbf{s}\right)}{\partial s_i} & = -\frac{\delta_i'}{m}\tr\left[\mathbf{\Psi}_i\left(\mathbf{I}+\delta\mathbf{\Psi}_i\right)^{-1}\right]  \\
& - \frac{\frac{\delta}{m} \mathbf{r}_i^T\left(\mathbf{I}+\delta\mathbf{\Psi}_i\right)^{-1}\mathbf{r}_i+\frac{\delta'_i s_i}{m} \mathbf{r}_i^T\left(\mathbf{I}+\delta\mathbf{\Psi}_i\right)^{-2}\mathbf{r}_i}{1+\delta s_i\mathbf{r}_i^T\left(\mathbf{I}+\delta \mathbf{\Psi}_i\right)^{-1}\mathbf{r}_i} \\
& + \frac{\delta_i'}{\delta}. \\
& = -\frac{\delta_i'}{m}\tr\left[\mathbf{\Psi}_i\left(\mathbf{I}+\delta \mathbf{\Psi}_i\right)^{-1}\right] \\
& - \frac{\delta}{m}	\left[\mathbf{R}^{\frac{1}{2}}\left(\mathbf{I}+\delta\mathbf{R}_s\right)^{-1}\mathbf{R}^{\frac{1}{2}}\right]_{i,i} \\
& - \frac{\delta_i's_i}{m} \frac{\left[\mathbf{R}^{\frac{1}{2}}\left(\mathbf{I}+\delta\mathbf{R}_s\right)^{-2}\mathbf{R}^{\frac{1}{2}}\right]_{i,i}}{1-\delta s_i\left[\mathbf{R}^{\frac{1}{2}}\left(\mathbf{I}+\delta \mathbf{R}_s\right)^{-1}\mathbf{R}^{\frac{1}{2}}\right]_{i,i}}\\
& +  \frac{\delta_i'}{\delta} \\
& = \left(1-c\right)\frac{\delta_i'}{\delta} + \frac{\delta_i'}{m \delta}\tr\left(\mathbf{I}+\delta \mathbf{R}_s\right)^{-1} \\
& - \frac{\delta}{m}	\left[\mathbf{R}^{\frac{1}{2}}\left(\mathbf{I}+\delta\mathbf{R}_s\right)^{-1}\mathbf{R}^{\frac{1}{2}}\right]_{i,i} \\
\end{align*}
\begin{align*}	
	\frac{\partial \overline{\text{LCE}}\left(\mathbf{s}\right)}{\partial s_i}
& 	= \left(1-c\right)\frac{\delta_i'}{\delta} + \frac{\delta_i'}{m \delta}\tr\left(\mathbf{I}+\delta \mathbf{R}_s\right)^{-1} \\
& - \frac{\delta}{m}	\left[\mathbf{R}^{\frac{1}{2}}\left(\mathbf{I}+\delta\mathbf{R}_s\right)^{-1}\mathbf{R}^{\frac{1}{2}}\right]_{i,i} 
\end{align*}
\subsection*{Gradient of the WEV}
The WEV converges a.s. to the following quantity
\begin{align*}
\overline{\text{WEV}}\left(\mathbf{s}\right) = -\frac{1}{\eta} + \frac{1}{m} \tr\left[\mathbf{R}_{\mathbf{s}}\left(\mathbf{I}+\eta \mathbf{R}_{\mathbf{s}}\right)^{-1}\right],
\end{align*}
where 
\begin{align*}
\eta^2 \tr\left[\mathbf{R}_{\mathbf{s}}^2\left(\mathbf{I}+\eta \mathbf{R}_{\mathbf{s}}\right)^{-2}\right]-m=0.
\end{align*}
Define 
\begin{align*}
f\left(\mathbf{s},y\right) & = y^2 \tr\left[\mathbf{R}_{\mathbf{s}}^2\left(\mathbf{I}+y \mathbf{R}_{\mathbf{s}}\right)^{-2}\right]-m \\
& = n-m + \tr \left[\left(\mathbf{I}+y \mathbf{R}_{\mathbf{s}}\right)^{-2}\right] - 2\tr \left[ \left(\mathbf{I}+y \mathbf{R}_{\mathbf{s}}\right)^{-1}\right]
\end{align*}
Then,
\begin{align*}
\frac{\partial f}{\partial y}\left(\mathbf{s},y\right) =2y \tr\left[\mathbf{R}^2_{\mathbf{s}}\left(\mathbf{I}+y \mathbf{R}_{\mathbf{s}}\right)^{-2}\right] - 2y^2 \tr \left[\mathbf{R}^3_{\mathbf{s}}\left(\mathbf{I}+y \mathbf{R}_{\mathbf{s}}\right)^{-3}\right].
\end{align*}
Note that 
\begin{align*}
\left(\mathbf{I}+y \mathbf{R}_{\mathbf{s}}\right)^{-1} & = \left(\mathbf{I}+y\mathbf{\Psi}_i+ys_i \mathbf{r}_i\mathbf{r}_i^T\right)^{-1} \\
& = \left(\mathbf{I}+y\mathbf{\Psi}_i\right)^{-1} \\
& - \frac{y s_i \left(\mathbf{I}+y\mathbf{\Psi}_i\right)^{-1} \mathbf{r}_i\mathbf{r}_i^T \left(\mathbf{I}+y\mathbf{\Psi}_i\right)^{-1}}{1+y s_i \mathbf{r}_i^T \left(\mathbf{I}+y\mathbf{\Psi}_i\right)^{-1} \mathbf{r}_i}
\end{align*}
Thus,
\begin{align*}
\tr	\left(\mathbf{I}+y \mathbf{R}_{\mathbf{s}}\right)^{-2} & = \tr\left(\mathbf{I}+y\mathbf{\Psi}_i\right)^{-2} \\
&+\frac{y^2 s_i^2 \mathbf{r}_i^T\left(\mathbf{I}+y\mathbf{\Psi}_i\right)^{-2} \mathbf{r}_i\mathbf{r}_i^T\left(\mathbf{I}+y\mathbf{\Psi}_i\right)^{-2} \mathbf{r}_i}{\left(1+y s_i \mathbf{r}_i^T \left(\mathbf{I}+y\mathbf{\Psi}_i\right)^{-1} \mathbf{r}_i\right)^2} \\
& -2 \frac{y s_i \mathbf{r}_i^T \left(\mathbf{I}+y\mathbf{\Psi}_i\right)^{-3}\mathbf{r}_i}{1+y s_i \mathbf{r}_i^T \left(\mathbf{I}+y\mathbf{\Psi}_i\right)^{-1} \mathbf{r}_i}.
\end{align*}
Therefore,
\begin{align*}
\frac{\partial f}{\partial s_i}\left(\mathbf{s},y\right) &  = \frac{2y \mathbf{r}_i^T \left[\left(\mathbf{I}+y\mathbf{\Psi}_i\right)^{-2}-\left(\mathbf{I}+y\mathbf{\Psi}_i\right)^{-3}\right] \mathbf{r}_i}{\left(1+y s_i \mathbf{r}_i^T \left(\mathbf{I}+y\mathbf{\Psi}_i\right)^{-1} \mathbf{r}_i\right)^2} \\
& + \frac{2y^2 s_i \left[\mathbf{r}_i^T \left(\mathbf{I}+y\mathbf{\Psi}_i\right)^{-2}\mathbf{r}_i\right]^2}{\left(1+y s_i \mathbf{r}_i^T \left(\mathbf{I}+y\mathbf{\Psi}_i\right)^{-1} \mathbf{r}_i\right)^3}.
\end{align*}
Thus, 
\begin{align*}
\eta_i' & = \frac{\partial \eta}{\partial s_i} \\
& = -  \left(2\eta \tr\left[\mathbf{R}^2_{\mathbf{s}}\left(\mathbf{I}+\eta \mathbf{R}_{\mathbf{s}}\right)^{-2}\right] - 2\eta^2 \tr\left[\mathbf{R}^3_{\mathbf{s}}\left(\mathbf{I}+\eta \mathbf{R}_{\mathbf{s}}\right)^{-3}\right]\right)^{-1} \\
& \times \Biggl[\frac{2\eta \mathbf{r}_i^T \left[\left(\mathbf{I}+\eta\mathbf{\Psi}_i\right)^{-2}-\left(\mathbf{I}+\eta\mathbf{\Psi}_i\right)^{-3}\right] \mathbf{r}_i}{\left(1+\eta s_i \mathbf{r}_i^T \left(\mathbf{I}+\eta\mathbf{\Psi}_i\right)^{-1} \mathbf{r}_i\right)^2} \\
& + \frac{2\eta^2 s_i \left[\mathbf{r}_i^T \left(\mathbf{I}+\eta\mathbf{\Psi}_i\right)^{-2}\mathbf{r}_i\right]^2}{\left(1+\eta s_i \mathbf{r}_i^T \left(\mathbf{I}+\eta\mathbf{\Psi}_i\right)^{-1} \mathbf{r}_i\right)^3}\Biggr] \\
& =  -  \left( \tr\left[\mathbf{R}_{\mathbf{s}}\left(\mathbf{I}+\eta \mathbf{R}_{\mathbf{s}}\right)^{-2}\right] - \tr\left[\mathbf{R}_{\mathbf{s}}\left(\mathbf{I}+\eta \mathbf{R}_{\mathbf{s}}\right)^{-3}\right]\right)^{-1} \\
& \times \eta\left( \left[\mathbf{R}^{\frac{1}{2}}\left(\mathbf{I}+\eta\mathbf{R}_s\right)^{-2}\mathbf{R}^{\frac{1}{2}}\right]_{i,i}- \left[\mathbf{R}^{\frac{1}{2}}\left(\mathbf{I}+\eta\mathbf{R}_s\right)^{-3}\mathbf{R}^{\frac{1}{2}}\right]_{i,i}\right) 
\end{align*}
Finally,
\begin{align*}
\frac{\partial \overline{\text{WEV}}\left(\mathbf{s}\right)}{\partial s_i} & =-\frac{\left(c-1\right)\eta_i'}{\eta^2}+\frac{\eta_i'}{m\eta^2} \\
& \times \left[2\tr\left(\mathbf{I}+\eta\mathbf{R}_s\right)^{-1}-\tr\left(\mathbf{I}+\eta\mathbf{R}_s\right)^{-2}\right] \\
&+\frac{1}{m}\left[\mathbf{R}^{\frac{1}{2}}\left(\mathbf{I}+\eta\mathbf{R}_s\right)^{-2}\mathbf{R}^{\frac{1}{2}}\right]_{i,i}
\end{align*}
\section*{Appendix B}
%
%
\section*{Proof of Theorem 1}
We prove the convexity of the $\overline{\text{MSE}}$ using a similar argument as the one proposed in \cite{dumont}. 
Let $p$ be a non negative integer, and $\mathbf{\widetilde{W}}$ be a $np\times mp$ matrix with i.i.d $\mathcal{CN}\left(0,1\right)-$ distributed entries. For $\mathbf{s} \in \mathbb{R}^n_+$, let $\mathbf{R}\left(\mathbf{s}\right) = \mathbf{R}^{\frac{1}{2}} \diag\left(\mathbf{s}\right)\mathbf{R}^{\frac{1}{2}}$ and
\begin{align*}
 \widetilde{\mathbf{R}}\left(\mathbf{s}\right)=\mathbf{I}_p \otimes \mathbf{R}\left(\mathbf{s}\right),
\end{align*}
where $\otimes$ is the Kronecker product of matrices. Then, $\widetilde{\mathbf{R}}$ is a $np \times np$ positive definite block diagonal matrix.  Since $p$ affects both dimensions $m$ and $n$ at the same pace, we have the following convergence
\begin{equation}
\lim_{p \to \infty} \tr \left[\left(\mathbf{\widetilde{W}}^H\widetilde{\mathbf{R}}\left(\mathbf{s}\right) \mathbf{\widetilde{W}}\right)^{-1}\right] = \widetilde{\text{MSE}}\left( \mathbf{s}\right),
\end{equation}
where $ \widetilde{\text{MSE}}\left( \widetilde{\mathbf{R}}\left(\mathbf{s}\right)\right) = \widetilde{\delta}\left(\mathbf{s}\right)$ is solution to the following fixed-point equation
\begin{equation}
	\widetilde{\delta}\left(\mathbf{s}\right) = \frac{mp}{\tr \left[\widetilde{\mathbf{R}}\left(\mathbf{s}\right)\left(\mathbf{I}+	\widetilde{\delta}\left(\mathbf{s}\right) \widetilde{\mathbf{R}}\left(\mathbf{s}\right)\right)^{-1}\right]}.
\end{equation}
Using the block-diagonal structure of $\widetilde{\mathbf{R}}\left(\mathbf{s}\right)$, we have
\begin{align*}
	\left(\mathbf{I}+	\widetilde{\delta}\left(\mathbf{s}\right) \widetilde{\mathbf{R}}\left(\mathbf{s}\right)\right)^{-1} = \mathbf{I}_p \otimes \left(\mathbf{I}+\widetilde{\delta}\left(\mathbf{s}\right) \mathbf{R}\left(\mathbf{s}\right)\right)^{-1}.
\end{align*}
Then,
\begin{align*}
	\widetilde{\mathbf{R}}\left(\mathbf{s}\right)\left(\mathbf{I}+	\widetilde{\delta}\left(\mathbf{s}\right) \widetilde{\mathbf{R}}\left(\mathbf{s}\right)\right)^{-1} & = \left[\mathbf{I}_p \otimes \mathbf{R}\left(\mathbf{s}\right)\right]\\
	& \times \left[ \mathbf{I}_p \otimes \left(\mathbf{I}+\widetilde{\delta}\left(\mathbf{s}\right) \mathbf{R}\left(\mathbf{s}\right)\right)^{-1}\right] \\
	& = \mathbf{I}_p \otimes \left[\mathbf{R}\left(\mathbf{s}\right)\left(\mathbf{I}+\widetilde{\delta}\left(\mathbf{s}\right) \mathbf{R}\left(\mathbf{s}\right)\right)^{-1}\right]
\end{align*}
Thus, 
\begin{align*}
	\tr \left[\widetilde{\mathbf{R}}\left(\mathbf{s}\right)\left(\mathbf{I}+	\widetilde{\delta}\left(\mathbf{s}\right) \widetilde{\mathbf{R}}\left(\mathbf{s}\right)\right)^{-1}\right] = p \tr \left[\mathbf{R}\left(\mathbf{s}\right)\left(\mathbf{I}+\widetilde{\delta}\left(\mathbf{s}\right) \mathbf{R}\left(\mathbf{s}\right)\right)^{-1}\right]
\end{align*}
and, 
\begin{align*}
		\widetilde{\delta}\left(\mathbf{s}\right) & = \frac{m}{\tr \left[\mathbf{R}\left(\mathbf{s}\right)\left(\mathbf{I}+\widetilde{\delta}\left(\mathbf{s}\right) \mathbf{R}\left(\mathbf{s}\right)\right)^{-1}\right]} \\
		& = \overline{\text{MSE}}\left(\mathbf{s}\right).
\end{align*}
In other words,
\begin{equation}
\lim_{p \to \infty} \tr \left[\left(\mathbf{\widetilde{W}}^H\widetilde{\mathbf{R}}\left(\mathbf{s}\right) \mathbf{\widetilde{W}}\right)^{-1}\right] = \overline{\text{MSE}}\left(\mathbf{s}\right).
\end{equation}
As a matter of fact, $\overline{\text{MSE}}\left(\mathbf{s}\right)$ is the pointwise limit of convex functions, which implies that it is also convex. The proof of convexity of the LCE and the WEV follow using the same argument. This concludes the proof of Theorem 1.
\\
\end{document}